\documentclass[11pt]{amsart}
\usepackage{latexsym, amsmath, amssymb,amsthm,amsopn,amsfonts,mathrsfs,tikz}
\usepackage{stmaryrd}
\usepackage{version}
\usepackage{epsfig,graphics,color,graphicx,graphpap,dsfont,eufrak}
\usepackage{amssymb}

\usepackage{tikz}

\usetikzlibrary{snakes}

\ifx\pdfoutput\undefined
  \DeclareGraphicsExtensions{.eps}
\else
  \ifx\pdfoutput\relax
    \DeclareGraphicsExtensions{.eps}
  \else
    \ifnum\pdfoutput>0
      \DeclareGraphicsExtensions{.pdf}
    \else
      \DeclareGraphicsExtensions{.eps}
    \fi
  \fi
\fi

\allowdisplaybreaks

\setlength{\textheight}{8in} \setlength{\oddsidemargin}{0.0in}
\setlength{\evensidemargin}{0.0in} \setlength{\textwidth}{6.4in}
\setlength{\topmargin}{0.18in} \setlength{\headheight}{0.18in}
\setlength{\marginparwidth}{1.0in}
\setlength{\abovedisplayskip}{0.2in}

\setlength{\belowdisplayskip}{0.2in}

\setlength{\parskip}{0.05in}

\allowdisplaybreaks

\pagestyle{headings}

\newcommand{\LV}{\left|}
\newcommand{\RV}{\right|}

\newcommand{\LB}{\left[}
\newcommand{\RB}{\right]}
\newcommand{\LC}{\left(}
\newcommand{\RC}{\right)}

\newcommand{\LCB}{\left\{}
\newcommand{\RCB}{\right\}}
\newcommand{\p}{\partial}
\newcommand{\R}{\mathbb{R}} 
\newcommand{\vertiii}[1]{{\left\vert\kern-0.25ex\left\vert\kern-0.25ex\left\vert #1 
    \right\vert\kern-0.25ex\right\vert\kern-0.25ex\right\vert}}

\newcommand{\be}{\begin{equation}}
\newcommand{\ee}{\end{equation}}

\newcommand{\bse}{\begin{subequations}}
\newcommand{\ese}{\end{subequations}}

\newcommand{\jump}[1]{\left\llbracket{#1}\right\rrbracket}

\DeclareFontFamily{OT1}{pzc}{}
\DeclareFontShape{OT1}{pzc}{m}{it}{<-> s * [1.10] pzcmi7t}{}
\DeclareMathAlphabet{\mathpzc}{OT1}{pzc}{m}{it}

\theoremstyle{plain}

\newtheorem{thm}{Theorem}[section]
\newtheorem{prop}{Proposition}[section]
\newtheorem{cor}[prop]{Corollary}

\theoremstyle{definition}

\newtheorem{assumption}{Assumption}

\newtheorem{rem}{Remark}[section]

\numberwithin{equation}{section}

\def\squarebox#1{\hbox to #1{\hfill\vbox to #1{\vfill}}}

\usepackage{amsxtra}

\def\ba{\begin{array}}
\def\ea{\end{array}}
\def\bea{\begin{eqnarray}}
\def\eea{\end{eqnarray}}
\def\beas{\begin{eqnarray*}}
\def\eeas{\end{eqnarray*}}
\def\bi{\begin{itemize}}
\def\ei{\end{itemize}}


\def\p{\partial}







\def\({\textnormal{(}}
\def\){\textnormal{)}}
\def\[{[\neg[}
\def\]{]\neg]}


\def\neg{\negthinspace}




\def\b1{{\bf 1}}


\usepackage{hyperref}

%


\title[Surface recovery from pressure]{Reconstruction of stratified steady water waves from pressure readings on the ocean bed}

\author[R.~M. Chen]
{Robin Ming Chen}
\address{Robin Ming Chen\newline
Department of Mathematics\\
University of Pittsburgh\\
Pittsburgh, PA 15260} \email{mingchen@pitt.edu}
\author[S. Walsh]
{Samuel Walsh}
\address{Samuel Walsh \newline
Department of Mathematics \\ 
University of Missouri\\
Columbia, MO 65211}
\email{walshsa@missouri.edu}


\begin{document}

\begin{abstract}
Consider a two-dimensional stratified solitary wave propagating through a body of water that is bounded below by an impermeable ocean bed.  In this work, we study how such a wave can be reconstructed from data consisting of the wave speed, upstream and downstream density profile, and the trace of the pressure on the bed.  First, we prove that this data uniquely determines the wave, both in the (real) analytic and Sobolev regimes.  Second, for waves that consist of multiple layers of constant density immiscible fluids, we provide an exact formula describing each of the interfaces in terms of the data.    Finally, for continuously stratified fluids, we detail a  reconstruction scheme based on approximation by layer-wise constant density flows.  
\end{abstract}
\maketitle

\section{Introduction} \label{intro section}

One of the primary means of measuring waves at sea are \emph{pressure transducers} --- recording devices distributed throughout the ocean that read the subsurface pressure.  There are many advantages to this approach:  pressure transducers consume relatively little energy, are inexpensive to build and maintain, and provide a wealth of {\it in situ} data.  As they are deployed submerged, moreover, they have a minimal impact on marine life and ocean traffic. 

The use of pressure transducers raises several intriguing mathematical questions:  How much information can really be gleaned about the flow from just the trace of the pressure on the ocean bed?  Is it possible to totally reconstruct a passing traveling wave from this data?  These issues are the subject of the present paper.  Specifically, we are concerned with the role density stratification may play and whether pressure transducers can be used to track internal waves.  

Let us first discuss the situation for constant density fluids. If one takes the linearized free surface Euler equations for a small-amplitude gravity wave,  it is simple to show that the  surface profile can be expressed in terms of the pressure trace; indeed, this mapping is a Fourier multiplier whose symbol is referred to as the (linear) \emph{pressure transfer function} (see, e.g., \cite{bishop1987measuring,escher2008recovery,kinsman1965wind,kuo1994transfer}).  More concretely, for a sinusoidal wave with wavenumber $k$, the connection between the the dynamic pressure $\mathfrak{p}$  and the surface displacement $\eta$ can be written as 
\be\label{linear}
\mathfrak{p} = \rho g {\cosh(kh_m) \over \cosh(kd)} \eta,
\ee
where $h_m$ is the height of the pressure meter above the bottom, $d$ is the mean depth, and the dynamic pressure $\mathfrak{p}$ is defined to be the difference between the pressure and the hydrostatic pressure, that is,
$$\mathfrak{p} = P - P_{\textrm{atm}} - \rho g (d - h_m). $$
For pressure data obtained from the sea-bed, $h_m=0$ and hence \eqref{linear} becomes
$$\mathfrak{p} = {\rho g \over \cosh(kd)} \eta,$$
which, in the zero depth limit $d\to 0$, recovers the hydrostatic approximation
\be\label{hydrodynamic}
\mathfrak{p} = \rho g \eta.
\ee

However, it is well known that nonlinear effects play an important role in the surface reconstruction problem for shallow water waves or for waves in the surf zone (see \cite{bergan1968wave,bishop1987measuring,tsai2005recovery}, for instance).  Experiments suggest that the predictions of  linear models such as \eqref{linear} may  diverge sharply from observation \cite{cavaleri1980wave}. For waves of large or even moderate amplitude, the discrepancy can be significant \cite{bishop1987measuring,tsai2005recovery}. Another issue with the linear theory is that, due to its reliance on Fourier series techniques, it can only treat periodic traveling waves.     

These limitations have motivated a recent push to obtain reconstruction formulas that account for nonlinear effects and which apply to solitary waves.  Nonlinear nonlocal equations relating the trace of the dynamic pressure on the bed to the surface profile of a solitary wave were established by several groups of researchers (see \cite{constantin2012pressure,deconinck2011recovering,oliveras2012recovering}). Notably, these papers work directly from the free surface Euler equations without further approximation.  In \cite{clamond2013new,constantin2013pressure}, the authors obtained implicit but exact and tractable relations for periodic traveling waves and performed straightforward numerical procedures to derive the free surface from the pressure at the bed. In \cite{constantin2014estimating}, Constantin used a strong maximum principle argument completely different from the aforementioned approach to derive some improved estimates for the wave height of periodic traveling waves.

The above body of work focuses on irrotational steady waves propagating through constant density water.  Actual waves in the  ocean, however, frequently exhibit a heterogeneous density due to salinity or temperature gradients.  This {density stratification} generates vorticity within the flow as the effects of gravity and inertia are experienced differently by heavier and lighter fluid particles.  Moreover, a heterogeneous density distribution can permit large-amplitude waves to propagate in the bulk even while the surface remains relatively undisturbed.  These so-called \emph{internal waves} have been observed in oceanic areas for many years (cf., e.g. \cite{perry1965large}, or the compendium in \cite{jackson2004atlas}). They are known to have important implications for particle transport \cite{kingsford1986influence,shanks1983surface,shanks1987onshore}, mixing and energy dissipation \cite{inall2000impact,kumar2010internal,mackinnon2003mixing,moum2003structure}, as well as affecting acoustic propagation \cite{preisig1997coupled,rubenstein1999observations,zhou1991resonant}. 

The vorticity that naturally accompanies stratification poses a serious mathematical challenge:  nearly all of the irrotational theory is built on tools (e.g., conformal mappings, nonlocal reformulations via Dirichlet--Neumann operators, variational principles, etc.) that do not have obvious analogues in the rotational regime.  As a consequence, rigorous results on rotational waves have begun to appear in the literature only quite recently. Assuming the vorticity is constant throughout the  fluid domain, which is the simplest model of a wave-current interaction, Ali and Kalisch \cite{ali2013reconstruction} derived a relationship in the long-wave limit that gives a direct map between the surface elevation and the pressure at the bottom of the fluid and within the bulk of the fluid domain. Later Visan and Oliveras \cite{vasan2014pressure} obtained an exact reconstruction formula in this regime without approximation. 

At present, no one to our knowledge has discovered how to adapt these ideas to the case of an \emph{arbitrary} vorticity distribution.  Indeed, constant vorticity is a very special (though physically significant) case  that rarely provides much insight into how to attack the general problem.   However, in \cite{henry2013pressure}, D. Henry  proved that, if the vorticity and pressure are real analytic, then the trace of the pressure on the bed uniquely determines a solitary rotational wave. The analyticity condition is needed because he employs a Cauchy-Kowalevski argument. In that sense, the surface reconstruction problem is well-defined even for rotational waves so long as they are extremely regular.

{\bf Unique determinability with stratification.} Our first results pertain to the question of unique determinability of a stratified steady wave from the trace of the pressure on the bed.  Mathematically, stratification plays a somewhat similar role in the governing equation for steady waves as vorticity.  This suggests that Henry's method can be adapted to analytically stratified flows.  Following this strategy, we prove that, if two stratified traveling waves share the same wave speed, analytic streamline density function, and analytic pressure trace on the bed, then they must coincide exactly; see Theorem \ref{thm_determinability} and Corollary \ref{cor_periodic}.  

However, analyticity is a very strong assumption that is particularly ill-suited to stratified waves.  Water columns in the ocean are frequently observed to have nearly constant density outside of thin transition layers called pycnoclines.  Thus, as it passes through a pycnocline, the density experiences something close to a jump discontinuity.  Indeed, a common practice is to view these waves as being a layering of multiple immiscible fluids each with its own constant density. In this model, the density distribution is piece-wise analytic, allowing us to iterate the Cauchy-Kowalevski approach of Henry and determine all of the free surfaces (cf. Corollary \ref{cor_layered}).  

The more interesting question is whether one has unique determinability for a continuous but not necessarily analytic density (e.g., the existence theory in \cite{walsh2009stratified} is formulated for density of class $C^{1, \alpha}$).  For this case, Henry's argument does not seem to apply and so an entirely new idea is needed.   In Theorem \ref{thm_determinability general}, we confirm that unique determinability still holds in the Sobolev space setting.  Our method is based on reformulating the problem as an elliptic equation on a strip which can be treated using the theory of strong unique continuation (cf. Theorem \ref{thm_strong continuation}).  In fact, this approach can be used for constant density rotational waves as well, hence Theorem \ref{thm_determinability general} generalizes Henry's work to the Sobolev regularity regime (see Remark \ref{henry remark}).

{\bf Surface reconstruction.} The results above show that surface reconstruction from pressure problem is mathematically well-defined for stratified waves, even those with a more realistic degree of smoothness.  As the second contribution of this paper, we present an explicit scheme for performing this reconstruction.  In order to explain our ideas, first we must discuss the physical setup in more detail.

We are interested in two-dimensional solitary waves with heterogeneous density $\varrho$, limiting to uniform flows upstream and downstream, and traveling with a constant speed $c$. Shifting to a moving reference frame eliminates time dependence from the system and thus we may assume the wave occupies a steady fluid domain
\be \label{def Omega} \Omega = \{ (x,y) \in \mathbb{R}^2 : -d < y < \eta(x) \}, \ee
where the unknown function $\eta$ is the free surface profile.  We call the fluid \emph{continuously stratified} provided that $\varrho \in C(\overline{\Omega})$. We also assume that $\varrho > 0$ in $\overline{\Omega}$ and that the fluid is stably stratified, i.e.,  $y \mapsto \varrho(\cdot, y)$ is non-increasing.

Continuously stratified fluids are difficult to analyze directly since they are rotational.  We therefore begin by studying the \emph{layered model} wherein  $\Omega$ is partitioned into finitely many immiscible fluid regions
\be \label{def Omega_i} {\Omega} = \bigcup_{i=1}^N {\Omega_i}, \quad \Omega_i := \{ (x, y) \in \Omega : \eta_{i-1}(x) < y < \eta_i(x) \} \ee
with the density $\varrho_i := \varrho|_{\Omega_i}$ constant in each layer.  Here we index so that $\eta_0 := -d$, $\eta_N := \eta$ and $\Omega_i$ lies beneath $\Omega_{i+1}$, for $i = 1, \ldots N-1$; see Figure \ref{layered figure}.   Note that this is implicitly assuming that the free surfaces all have graph geometry.  

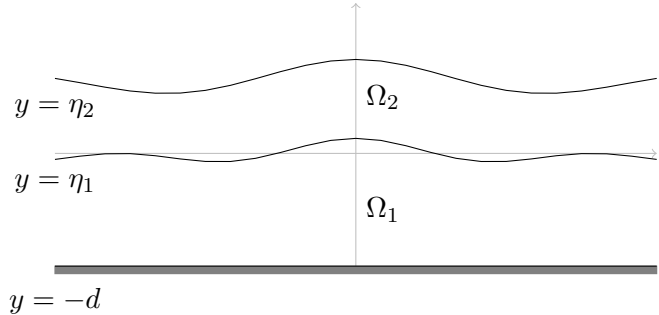
\begin{figure} \label{layered figure}
\begin{center}
\begin{tikzpicture}
\draw [thin,lightgray,->] (-4,0) -- (4,0); 							
\draw [thin,lightgray,->] (0,-1.5) -- (0,2) ;							

\draw [fill=gray,thin,gray] (-4,-1.6) rectangle (4,-1.5);				
\draw [thin,-] (-4, -1.5) -- (4, -1.5);

\draw [domain=-4:4] plot (\x, {0.10*cos(\x r)+0.10*cos(2*\x r)});		
\draw [domain=-4:4] plot (\x, {1.0+0.15*cos(\x r)+0.10*cos(1.5*\x r)});		
\node [ right] at (0,-0.75) {$\Omega_1$};
\node [ right] at (0,0.75) {$\Omega_2$};

\node [below  ] at (-4,-0.15) {$y = \eta_1$};
\node [below  ] at (-4,0.85) {$y = \eta_2$};
\node [below] at (-4,-1.65) {$y= -d$};
\end{tikzpicture}
\end{center}
\caption{Example with two fluid layers, $\Omega_1$ and $\Omega_2$.}
\end{figure}

The first task is to develop a procedure that allows us to recover a layer-wise constant density irrotational solitary waves from its pressure trace on the bed.  For this, we can exploit the fact that the velocity potential $\varphi$ and standard stream function $\psi_s$ are harmonic conjugates within each layer $\Omega_i$.  Since the density is constant,  $\varphi$ and the \emph{pseudo-stream function} $\psi := \sqrt{\varrho_i} \psi_s$ are harmonic conjugates as well.  The pseudo-stream function is better adapted to the study of stratified flows than the standard stream function.  Then, the conformal change of variables 
$$(x, y) \mapsto \LC -\frac{1}{c}\varphi(x,y), \, -\frac{1}{c} \psi(x,y) \RC, $$
transforms each layer $\Omega_i$ to a fixed horizontal strip (cf. Section \ref{subsec_conformal}). At this point, an adaptation of Constantin's method \cite{constantin2012pressure} allows us to reconstruct the flow in the first layer up to the first interface.   However, when we attempt to continue the process and compute the flow in the second layer,  we are confronted with an unfortunate feature of the conformal coordinates: because the velocity potential is in general discontinuous over the interface, the conformal variables behave badly as one crosses from one fluid region to the next.  This necessitates some additional work, but can be overcome with a further reparmeterization of the data on the interface, and by paying careful attention to the jump conditions there (cf. Section \ref{subsec repara}).  Ultimately, we are able to obtain an explicit reconstruction that determines the flow in every layer, as well as all of the free surfaces, directly from the trace of the pressure on the bed.

Finally, we return to the case of continuously stratified waves using an approximation argument.  As discussed above, the layered model is a convenient idealization of a continuously stratified fluid.  In a very recent work, the authors showed that it can in fact be made \emph{rigorous} for periodic traveling waves in a certain small-amplitude regime (see \cite{ChenWalsh2014continuity}).  That is, one can show that as  $\sum^N_{i=1} \varrho_i \mathds{1}_{\Omega_i}$ converges to $\varrho$ in $L^\infty$, the corresponding layer-wise smooth waves converge to the continuously stratified one.  On the other hand, solitary waves in this same regime can be realized as the limit of periodic traveling waves taking the period to infinity  \cite{turner1984variational}.  Using a diagonalization argument, we are therefore able to reconstruct a continuously stratified solitary wave as a limit of  layer-wise constant waves considered in the previous paragraph; see the discussion in Section \ref{subsec layer approx}.

\section{Formulation}\label{formulation section}

In this section, we present several formulations of the steady stratified water wave system, each one tailored to one of the problems we later consider.  Their equivalence is fairly straightforward to show (cf., e.g.,  \cite[Lemma A.2]{ChenWalsh2014continuity} where it is proved for much weaker regularity solutions).

\subsection{Eulerian formulation}\label{subsec eulerian}  Consider a two-dimensional wave in a stratified water propagating to the right with speed $c > 0$.  We assume that the only restorative force is gravity.  Adopting a moving reference frame, we can view the system as time-independent.  The fluid domain $\Omega$ is a priori unknown, but, as in \eqref{def Omega}, we assume that it lies below the graph of some smooth function $\eta$ and above an impermeable ocean bed at $\{ y = -d \}$.  

Let us now formulate the governing equations for a multi-layered wave, which includes the family of continuously stratified fluid as a special case.  With that in mind, suppose that $\Omega$ is organized into immiscible strata as in \eqref{def Omega_i}.  Then the wave is described mathematically by a velocity field $(u,v) : \Omega \to \mathbb{R}^2$, a density function $\varrho : \Omega \to \mathbb{R}_+$, and a pressure $P : \Omega \to \mathbb{R}$.  We follow the convention that, for a quantity $f$ with domain $\Omega$, $f_i := f|_{\Omega_i}$.  

We are interested in $(u,v,\varrho)$ for which each restriction $(u_i, v_i, \varrho_i)$ is reasonably smooth in each $\overline{\Omega_i}$, but we do not impose any continuity assumptions on $\overline{\Omega}$.  By contrast, $P$ will be continuous throughout the fluid domain.   Note that at this stage we do not require that $\varrho_i$ is a constant.  

To represent a traveling water wave with wave speed $c$, $(u,v,\varrho, P)$ must solve the incompressible Euler system:  conservation of mass
\bse  \label{weakeuler}\be (u-c) \varrho_x + v \varrho_y  =  0, \qquad \textrm{in } \Omega_i, \label{weakmass}  \ee
conservation of momentum
\begin{align}
\varrho (u-c) u_x + \varrho v u_y  & =  -P_x , \qquad \textrm{in } \Omega_i, \label{weakmomentumx} \\
\varrho (u-c) v_x + \varrho v v_y & =  -P_y - g \varrho, \qquad \textrm{in } \Omega_i, \label{weakmomentumy} \end{align} 
and incompressibility
\be u_x + v_y =  0, \qquad \textrm{in } \Omega_i. \label{weakvolume}  \ee
\ese
Here $g$ the gravitational constant of acceleration. We say a fluid is stably stratified if 
\be y \mapsto \varrho(\cdot,y) \textrm{ is non-increasing.} \label{stablestratificationeuler} \ee

The motion of the interfaces is driven by the kinematic boundary condition:
\bse \label{weakeulerboundary} \be 
v =  (u-c) \partial_x \eta_i, \qquad \textrm{on } \{y = \eta_i(x)\}, \label{weakkinematicinterface} \ee
Note that this includes the air-sea interface.  Also implicit above is the fact that 
$$ \frac{v_i}{u_i-c} = \frac{v_{i+1}}{u_{i+1}-c} \qquad \textrm{on } \{ y = \eta_i(x)\},$$
which follows from the continuity of the normal component of the velocity over the interface.  Since the ocean bed is impermeable, we must have that 
\be v =  0, \qquad \textrm{on } \{y = -d\}. \label{weakkinematicbed} \ee

Finally, for the pressure we require  
\be P =  \displaystyle P_{\textrm{atm}}, \qquad \textrm{on } \{y = \eta(x)\}, \label{weakdynamic} \ee
where $P_{\textrm{atm}}$ is the (constant) atmospheric pressure.  This choice effectively neglects surface tension.

It will be important for our later reformulations to assume that there is no horizontal stagnation in the flow:
\be u-c < 0 \qquad \textrm{in } \overline{\Omega} \label{nostagnation}. \ee \ese
Looking at \eqref{weakmomentumx}--\eqref{weakmomentumy}, it is clear that \eqref{nostagnation}  prohibits a certain degeneracy in the system.  

In this work, we mostly consider solitary waves.  This means that we impose the asymptotic conditions 
\be (u,v) \to (\mathring{u},0),~ \eta(x) \to 0 \qquad \textrm{as } |x| \to \infty. \label{upstream condition} \ee
Here $\mathring{u} = \mathring{u}(y)$ is a given function.  In most cases, we will assume that $\mathring{u} = 0$, i.e., there is uniform velocity at infinity.  For the time being, though, we state things in their full generality.  

\subsection{Stream function formulation}\label{stream function subsection}
The Euler equations are a complicated nonlinear system.  It is often far more convenient to reformulate them in terms of a scalar quantity.  Observe that the conservation of mass \eqref{weakmass} and incompressibility \eqref{weakvolume} ensure that we may define a function $\psi = \psi(x,y)$ by  
\be \psi_x = -\sqrt{\varrho}v,\qquad \psi_y = \sqrt{\varrho} (u-c)  \qquad \textrm{in } \Omega. 
\label{defpsi} \ee
This is the pseudo (relative) stream function for the flow, but throughout this paper we will simply refer to it as the \emph{stream function}. From \eqref{defpsi} and \eqref{nostagnation}, we see that the no stagnation condition translates to
\be \psi_y < 0 \qquad \textrm{in } \Omega. 
\label{nostagnationpsi} \ee

The level sets of $\psi$ are called the \emph{streamlines}. In fact \eqref{weakkinematicinterface}--\eqref{weakkinematicbed} directly imply that the free surface, internal interfaces, and ocean bed are all streamlines. Since \eqref{defpsi} only determines $\psi$ up to a constant in each $\Omega_i$, we may take $\psi$ to be continuous in $\overline{\Omega}$, and set  $\psi = 0$ on the air--sea interface.  Then $\psi = -p_0$ on the bed $\{y = -d\}$, where $p_0$ is the \emph{(relative) pseudo-volumetric mass flux}: 
\be p_0 := \int_{-d}^{\eta(x)} \sqrt{\varrho(x,y)}\left[u(x,y) -c \right] \, dy, \label{defp0} \ee
which is a (strictly negative) constant.  Likewise, we define 
\be p_i := \int_{-d}^{\eta_i(x)} \sqrt{\varrho(x,y)}\left[u(x,y) -c \right] \, dy, \label{defpi} \ee
so that $\{ y = \eta_i(x)\}$ coincides with the level set $\{ \psi = -p_i\}$.  

The conservation of mass \eqref{weakmass} implies that there exists a function $\rho:[p_0,0] \to \mathbb{R}^{+}$  such that 
\be \varrho(x,y) = \rho(-\psi(x,y)) \qquad \textrm{in } \Omega.  \label{defrho} \ee
We refer to $\rho$ as the \emph{streamline density function} and treat it as given.  For solitary waves, one determines $\rho$ by examining the flow  upstream or downstream.  

Conservation of energy can be expressed via Bernoulli's theorem, which states that the quantity
\be E := P + \frac{\varrho}{2}\left( (u-c)^2 + v^2\right) + g\varrho y, \label{defE} \ee
is constant along streamlines. This allows us to define the so-called \emph{Bernoulli function} $\beta:[0,|p_0|] \to \mathbb{R}$ by 
\be \frac{dE}{d\psi}(x,y) = -\beta(\psi(x,y)), \qquad \textrm{in } \Omega. 
\label{defbeta} \ee
The dynamics in the interior of each layer is then captured by Yih's equation \cite{yih1965dynamics}
\bse \label{stream function formulation} 
\be \Delta \psi - g y \rho^\prime(-\psi) + \beta(\psi) = 0 \qquad \textrm{in }  \Omega. \label{yih equation} \ee
Note that if in some layer $\Omega_i$ the flow is irrotational and $\rho_i$ is a constant, then \eqref{yih equation} reduces to Laplace's equation for $\psi$.  

Lastly, for the stream function formulation to be equivalent to the full Euler system, it is necessary that $E$ satisfies certain jump conditions on the free surfaces.  On the air--sea interface, $P = P_{\textrm{atm}}$ and hence \eqref{defE} gives immediately that 
\be |\nabla \psi|^2 + 2g \varrho_N (y + d) = Q, \qquad \textrm{on } \{ y = \eta(x) \}, \label{psi Bernoulli top} \ee 
where $Q = 2(E+\varrho d)|_{y=\eta}$.  
Likewise, for the interior interfaces we use the fact that $P$ is continuous in $\overline{\Omega}$ when evaluating $E$ from above and below on $\{ y = \eta_i(x) \}$.  This results in  
\be |\nabla \psi_{i+1}|^2 - |\nabla \psi_{i}|^2 + 2g (\rho_{i+1} - \rho_{i}) (\eta_i + d) =Q_i \qquad \textrm{on } \{ y = \eta_i(x) \}, \label{jump E} \ee
with $Q_i := 2( E_{i+1} + g \rho_{i+1} - E_i - g \rho_i)|_{y = \eta_i}$.  

For solitary waves, the form at infinity strongly constrains the global structure.  First, observe that \eqref{upstream condition} implies that \be  \nabla^\perp \psi = \sqrt{\varrho}(u-c,v) \to \sqrt{\rho}(\mathring{u}-c,0), ~ \eta(x) \to 0 \qquad \textrm{as } |x| \to \infty. \label{upstream psi} \ee
Let us focus on the simplest case where $\mathring{u} = 0$.  Then, letting $\mathring{h}(p)$ denote the limiting height above the bed of the streamline $\{ \psi = -p \}$, from \eqref{defpsi} we have 
 \begin{equation}\label{limiting h}
\mathring{h}(p) = \int_{p_0}^p \frac{1}{c \sqrt{\rho(s)}} \, ds.
\end{equation}
In fact, this couples the wave speed and density because we can evaluate the above identity on the air--sea interface to get 
\be \label{c determined by rho} d = \int_{p_0}^0 \frac{1}{c\sqrt{\rho(s)}} \, ds. \ee
Now,  it follows that
\begin{equation}\label{Q}
Q  = \varrho_N c^2 + 2g \varrho_N d,  \quad
Q_i  = \LC \varrho_{i+1} - {\varrho}_i \RC \left( c^2 + 2g \mathring{h}\right).
\end{equation}

Finally, taking the upstream limit of \eqref{defE}, we can actually compute $\beta$ directly in terms of $\mathring{h}$:
\be \beta(p) = \rho^\prime(p) \left[ \frac{1}{2} c^2 + g (\mathring{h}-d)\right].  \label{solitary beta} \ee
This is a well known fact about solitary stratified waves, but an elementary derivation is given in \cite{ChenWalsh2014continuity}.  
\ese

In summary, the stream function formulation of the problem is to find $\psi$ continuous in $\overline{\Omega}$ that satisfies Yih's equation \eqref{yih equation} in each $\Omega_i$, along with Bernoulli boundary condition \eqref{psi Bernoulli top} on the air--sea interface, the jump condition \eqref{jump E} on the internal interfaces, and the no stagnation condition \eqref{nostagnation}.  In addition, for a solitary wave, we must have that $\psi$ exhibits the asymptotic behavior \eqref{upstream psi}, which in turn implies that the constant $Q$, $Q_i$ are given as in \eqref{Q}.  

\subsection{Height function formulation}\label{height function subsection}
The stream function formulation has the advantage of being a scalar equation, but it is still posed in an a priori unknown domain.  For that reason, it is useful to consider changes of variables that map $\Omega$ to a fixed domain.  Of course, one should expect this to come at the cost of additional nonlinearity in the resulting PDE.  A particularly convenient choice is to adopt semi-Lagrangian variables 
$$ (x,y) \mapsto (q,p) := (x, -\psi).$$
These coordinates have a long history in the literature on existence of stratified water waves; their use dates back to Dubreil-Jacotin \cite{dubreil1937theoremes}.  Note that, because the internal interfaces and free surface are level sets of $\psi$, each layer $\Omega_i$ maps to a strip $\mathcal{R}_i := \{ (q,p) : p \in (p_{i-1}, p_{i} ) \}.$  Similarly, $\mathcal{R} := \bigcup_i \mathcal{R}$ is the image of $\Omega$.  

Let 
$$ h = h(q,p) := y +d,$$
which denotes the height above the bed of the point with $x$-coordinate $q$ that lies on the streamline $\{ \psi = -p\}$.  Notice that $h$ is continuous in $\overline{\mathcal{R}}$, though its derivatives will in general not be.  

One can show that Yih's equation \eqref{yih equation} is equivalent to the quasilinear elliptic problem 
\bse \label{heighteq general} 
\be 
 (1+h_q^2)h_{pp} + h_{qq}h_p^2 - 2h_q h_p h_{pq}  -g(h-d)\rho_p h_p^3  = -h_p^3 \beta(-p) \qquad \textrm{in } \mathcal{R}. \label{height equation} \ee
 The Bernoulli condition on the top \eqref{psi Bernoulli top} likewise becomes 
 \be 1+h_q^2 + h_p^2(2g\rho h- Q)  = 0 \qquad \textrm{on } \{ p = 0 \}, \label{h Bernoulli top} \ee
 while the jump conditions on the internal interfaces translate to 
 \be \frac{1+(\partial_q h_{i+1})^2}{ (\partial_p h_{i+1})^2} - \frac{1+(\partial_q h_{i})^2}{(\partial_p h_{i})^2} + 2 g(\rho_{i+1} - \rho_{i}) h = Q_i \qquad \textrm{on } \{ p = p_i\}. \label{h jump E} \ee
 Here we are adapting our previous convention and writing $h_i := h|_{\mathcal{R}_i}$.   Finally, simply from the definition it follows that
 \be h = 0 \qquad \textrm{on } \{p = p_0\}. \label{h bed cond} \ee
 \ese
 The derivation of this system can be found in many places (see, e.g., \cite{walsh2009stratified,walsh2013wind}).  
 
Restricting our attention to solitary waves with uniform velocity at infinity, \eqref{heighteq general} can be simplified by taking $\beta$ of the form \eqref{solitary beta}, and  $Q$, $Q_i$ as in \eqref{Q}.  For convenience, we record here the resulting system for a continuously stratified fluid (the layer-wise smooth case being a straightforward generalization):
\be\label{h eqn} \left \{ \begin{array}{lll}
(1+h_q^2)h_{pp} + h_{qq}h_p^2 - 2h_q h_p h_{pq} = -h_p^3 \rho_p\LB {1\over2}c^2 + g (\mathring{h} - h)  \RB, & \textrm{in } \mathcal{R},\\ 
& & \\
1+h_q^2 + h_p^2(2g\rho h- Q)  = 0, & \textrm{on }  \{p = 0\}, \\
h = 0, & \textrm{on } \{ p = p_0\}. \end{array} \right. \ee
Naturally, the definition of $\mathring{h}$  \eqref{limiting h} implies that we also have  
\be h \to \mathring{h} \qquad \textrm{as } |q| \to \infty. \label{asymptotic h} \ee  

\subsection{Reformulation using conformal coordinates}\label{subsec_conformal}

Recall that $\psi$ is given by \eqref{defpsi} along with the proscription that $\psi$ is continuous on $\overline{\Omega}$.  It follows that $\psi$ can be written  
\begin{equation*}
\psi(x, y) = -p_0 + \int^y_{-d} \sqrt{\rho(x, z)} \LB u(x, z) - c \RB\ dz.
\end{equation*}
From this expression, it is easy to confirm that $\psi \in C^{0,1}(\overline{\Omega})$. Observe also that Yih's equation \eqref{yih equation} implies that, if $\beta$ takes the form \eqref{solitary beta}, then in any layer where $\varrho_i$ is constant, the corresponding velocity field is irrotational and $\psi_i$ is harmonic.    

We may therefore define a harmonic function $\varphi$ with domain $\bigcup_i \Omega_i$ by taking $\varphi|_{\Omega_i}$ to be a harmonic conjugate of $\psi|_{\Omega_i}$. Thus $\nabla \varphi = \sqrt{\rho}(u-c,v)$, i.e., $\varphi$ is a pseudo velocity potential in each layer.  This only determines $\varphi|_{\Omega_i}$ up to a constant.  Anticipating an imminent change of variables, we choose to normalize so that 
\be \varphi(0,-d) = 0 \qquad \textrm{and} \qquad \{ (x,y) \in \overline{\Omega} : \varphi(x,y) = 0 \} \textrm{ is a connected set}. \label{defphi alt} \ee 
Note that there exists a unique continuous extension of $\varphi|_{\Omega_i}$ to $\overline{\Omega_i}$, but in general $\varphi$ will not be continuous over the interior interfaces.  This can be viewed as a consequence of the fact that the tangential velocity may jump between layers.   

As $\varphi$ and $\psi$ are harmonic conjugates in each $\Omega_i$, one can employ the hodograph transform: 
$$\mathscr{K}: (x,y) \mapsto \LC -\frac{1}{c}\varphi(x,y), \, -\frac{1}{c} \psi(x,y) \RC =: (t, s).$$
Then the image of each fluid layer $\Omega_i$ is a strip 
$$ \mathpzc{T}_i := \{ (t,s) \in \mathbb{R} \times (s_{i-1} , s_{i})\} $$
with $s_i := p_i/c$.  The fluid domain $\Omega$ is thus mapped to an $N$-slitted strip $\mathpzc{T} := \bigcup_i \mathpzc{T}_i. $

To keep the various formulations clear, designate the height above the bed in the $(t,s)$-coordinates by $H$: 
$$
H(t, s) = y + d \quad \textrm{for } (t, s) \in \mathpzc{T}.
$$
Therefore the $i$-th interface $\{y = \eta_i(x)\}$ can be parametrized by $t$:
\be\label{Yi}
Y_i(t) = H_i(t, s_i) - d.
\ee
In order to return to the physical variables $(x,y)$, it is also necessary to find a parameterization $x_i = x_i(t, s_i)$ in a way such that
\be\label{parametrization}
(x, \eta_i(x)) = (x_i(t, s_i), Y_i(t)).
\ee 

To reformulate the problem in the conformal coordinates, we begin collecting some simple change of variable calculations:  
\begin{equation}\label{chain rule}
\left\{\begin{split}
&\p_x = {\sqrt{\varrho_i}\over c}\LB (c-u)\p_{t} + v\p_{s} \RB, \\
&\p_y = {\sqrt{\varrho_i}\over c}\LB -v\p_{t} + (c-u)\p_{s} \RB,
\end{split}\right. 
\quad
\left\{\begin{split}
&H_{t} = y_{t} = -{cv \over \sqrt{\varrho_i} \LB (c-u)^2 + v^2 \RB} = -x_s,\\
&H_{s} = y_{s} = {c(c - u) \over \sqrt{\varrho_i} \LB (c-u)^2 + v^2 \RB} = x_t,
\end{split}\right. 
\quad\ \textrm{ in  } \mathpzc{T}_i.
\end{equation}
From this it also follows that
\begin{equation}\label{chain rule ts}
\left\{\begin{split}
&\p_{t} = H_{s}\p_x + H_{t} \p_y,\\
&\p_{s} = -H_{t} \p_x + H_{s} \p_y,
\end{split}\right.
\qquad
\left\{\begin{split}
v & = -{cH_t \over {\sqrt{\varrho_i}} \LC H_{t}^2 + H_{s}^2\RC},\\
c-u & = {cH_{s} \over {\sqrt{\varrho_i}} \LC H_{t}^2 + H_{s}^2 \RC},
\end{split}\right. 
\quad\ \textrm{ in  } \mathpzc{T}_i.
\end{equation}
Therefore the height function satisfies
\begin{equation}\label{conformal height eqn}
\left \{ \begin{array}{ll}
\Delta_{(t,s)} H  = 0, \qquad \qquad & \textrm{in } \mathpzc{T},\\
H = 0, & \textrm{on } s = s_0,\\
\displaystyle (H^2_t + H^2_s) \LC c^2 -2gH + 2gd \RC = {c^2 \over \varrho_N^2 } & \textrm{on } s = 0. \end{array} \right. \ee

For localized solitary waves as considered in \cite{amick1984semilinear,amick1986global,terkrikorov1963theorie,turner1981internal,turner1984variational}, we have the following asymptotics
\begin{equation}\label{asymptotics}
\left\{\begin{array}{ll}
\displaystyle \lim_{|t|\to \infty} \nabla_{(t,s)} H = \LC 0, {1\over \sqrt{\varrho_i}}\RC \quad \textrm{uniformly for } s \in (s_{i-1}, s_i).\\\\
\displaystyle \lim_{|t|\to \infty} H(t, s_i) = \mathring{H}(s_i),  \quad i = 1, \cdots N, 
\end{array}\right.
\end{equation} 
where $\mathring{H}(s) = \mathring{h}(cs)$. Moreover using \eqref{limiting h} to evaluate $\mathring{H}$ at $s_i$ yields
\be\label{limiting H values}
\mathring{H}(s_i) = \int_{p_0}^{p_i} \frac{1}{c \sqrt{\rho(\tau)}} \, d\tau = {1\over c} \sum^{i-1}_{j = 0} {p_{j+1} - p_j \over \sqrt{\varrho_{j+1}}} = \sum^{i-1}_{j = 0} {s_{j+1} - s_j \over \sqrt{\varrho_{j+1}}}.
\ee

\section{Determinability} \label{determinability section}
The objective of this section is to establish that the pressure reconstruction problem is mathematically well-defined.   That is, we prove in various regimes that the wave speed, streamline density function, Bernoulli function, and the trace of the pressure on the bed together uniquely determine a steady stratified water wave.   In particular, among all solitary waves with a given form at infinity, to each pressure trace there corresponds precisely one surface profile. 

The main insight underlying the argument is that, due to Bernoulli's law, the trace of the pressure on the bed gives Neumann data for the height function.  Since we already know that the height is $0$ on the bed, together this amounts to Cauchy data.  

To see this, first observe that, in the semi-Lagrangian coordinates, Bernoulli's law \eqref{defE} becomes
\be\label{Bernoulli in pq}
E = P + {1 + h^2_q \over 2 h^2_p} + g \rho (h - d).
\ee
Evaluating this on the flat bed and using the definition of the Bernoulli function $\beta$ \eqref{defbeta} leads to
\be\label{Bernoulli bottom general}
E|_{y = -d} = E|_{\psi = p_0} = E|_{\psi = 0} - \int^{p_0}_0 \beta(p)\ dp.
\ee
As before, the form of the flow in the far field \eqref{upstream condition} allows us to compute that the Bernoulli constant $E|_{\psi = 0}$ is
\be\label{Bernoulli top}
E|_{\psi = 0} = E|_{y = \eta} = P_{\textrm{atm}} + {1\over2} \rho(0) (c - \mathring{u}(0))^2.
\ee

Therefore, from \eqref{h bed cond} and \eqref{Bernoulli in pq}--\eqref{Bernoulli top}, we infer that
$$
P(q, p_0) + {1\over 2 h^2_p(q, p_0)} - g\rho(p_0)d = P_{\textrm{atm}} + {1\over2} \rho(0) (c - \mathring{u}(0))^2 - \int^{p_0}_0 \beta(p)\ dp.
$$
Finally, the absence of stagnation \eqref{nostagnation} implies $h_p > 0$, so we conclude 
\be\label{pressure transfer}
h_p(q, p_0) = \LB 2\LC P_{\textrm{atm}} + {1\over2} \rho(0) (c - \mathring{u}(0))^2 + g\rho(p_0)d - \int^{p_0}_0 \beta(p)\ dp -  P(q, p_0)\RC \RB^{-1/2}.
\ee
Thus the normal derivative of $h$ on $\{ p = p_0 \}$ is known as soon as the streamline density $\rho$,  Bernoulli function $\beta$,  limiting downstream/upstream horizontal velocity on the free surface, and the pressure data on the sea-bed are specified.

\subsection{Analytic setting}

We have seen that, treating the pressure trace as given,  \eqref{h bed cond} and \eqref{pressure transfer} amount to Cauchy data.  It is therefore useful to transform the elliptic height equation \eqref{heighteq general} into a first-order system that can be approached via Cauchy--Kowalevski. Define
\begin{equation}\label{newvar}
F := {h_q \over h_p}, \quad G := {1\over h_p}, \quad H := h.
\end{equation}
This is an abuse of notation in the sense that $H$ above is not the same as the one appearing in Section \ref{subsec_conformal}.  Nonetheless, for the purposes of the discussion in this subsection, there is little risk of confusion.  

From \eqref{nostagnation} and the identity  $1/h_p = \sqrt{\rho} (c - u)$ we know that these quantities are well-defined. We can then reformulate the height equation \eqref{heighteq general} as
\begin{equation}\label{1storder} \left\{ 
\begin{split}
& F_p = {F\over G} \LB {GF_q + FG_q + G \beta(-p) - Gg \rho_p (H - d) \over F^2 + G^2} \RB - {G_q \over G},\\
& G_p = {GF_q + FG_q +  G \beta(-p) - Gg \rho_p (H - d) \over F^2 + G^2},\\
& H_p = {1\over G},
\end{split} \right.
\end{equation}
with the following ``initial" conditions posed at $p = p_0$
\begin{equation*}
F(q, p_0) = H(q, p_0) = 0.
\end{equation*}
Here $G(\cdot, p_0)$ is uniquely determined by the pressure trace $P(\cdot, p_0)$ via \eqref{pressure transfer}. 
\begin{thm}\label{thm_determinability}
Let $h \in C^{2,\alpha}(\overline{\mathcal{R}})$ be a solution of the height equation \eqref{h eqn} representing a solitary water wave with a given stable real analytic streamline density stratification, a real analytic Bernoulli function, wave speed $c$, and  downstream/upstream horizontal velocity on the surface. Assume that the wave speed exceeds the horizontal fluid velocity $u < c$ through out the fluid. Then $h$ is real analytic and is uniquely determined by the pressure function $P$ on the flat bed.
\end{thm}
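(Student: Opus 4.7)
The plan is to recast the quasilinear elliptic boundary-value problem \eqref{h eqn} as a Cauchy problem in Cauchy-Kowalevski normal form posed on the flat bed $\{p = p_0\}$, and then to conclude uniqueness from Cauchy-Kowalevski together with the identity principle for real analytic functions. The reformulation has already been carried out in the excerpt: with $(F,G,H) := (h_q/h_p,\, 1/h_p,\, h)$, the system \eqref{1storder} is manifestly in normal form, its right-hand sides are real analytic in $(q,p,F,G,H,F_q,G_q)$ wherever $G > 0$, which holds throughout $\mathcal{R}$ by the no-stagnation hypothesis $u < c$, and the Cauchy data on $\{p = p_0\}$ are $F(q,p_0) = H(q,p_0) = 0$ together with $G(q,p_0)$ given as an analytic function of the pressure trace via \eqref{pressure transfer}.

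The first task is to upgrade the assumed $C^{2,\alpha}$ regularity of $h$ to real analyticity on $\overline{\mathcal{R}}$. Interior analyticity is a standard consequence of Morrey's theorem applied to the analytic quasilinear operator in \eqref{height equation} (with analytic $\rho$ and $\beta$). Analyticity up to the two analytic boundary components $\{p = p_0\}$ and $\{p = 0\}$ follows from the boundary version of the same theorem: the Dirichlet condition $h = 0$ on $\{p = p_0\}$ is trivially analytic, and the Bernoulli condition on $\{p = 0\}$ is an analytic function of $(h, h_q, h_p)$ in the open region $\{h_p > 0\}$, which is preserved by the no-stagnation assumption. In particular, $h_p(\cdot, p_0)$ is real analytic in $q$, and by the Bernoulli relation so is the trace $P(\cdot, p_0)$.

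Next, suppose $h^{(1)}$ and $h^{(2)}$ are two solutions of \eqref{h eqn} sharing the same wave speed, streamline density, Bernoulli function, asymptotic velocity profile, and pressure trace on the bed. By the previous paragraph, both produce identical real analytic Cauchy data for the system \eqref{1storder}: the zero functions for $F$ and $H$ from the bed condition, and the same analytic function of $P(\cdot, p_0)$ for $G$ via \eqref{pressure transfer}. The Cauchy-Kowalevski theorem applied to \eqref{1storder} yields uniqueness of the real analytic solution in a one-sided strip $\{p_0 < p < p_0 + \delta\}$, forcing the corresponding triples to coincide there and hence $h^{(1)} = h^{(2)}$ on that strip. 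Because both solutions are real analytic on the connected open strip $\mathcal{R}$ and agree on a nonempty open subset, the identity principle for real analytic functions extends the equality to all of $\mathcal{R}$, and by continuity to $\overline{\mathcal{R}}$.

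The principal obstacle is the boundary analyticity claim in the first task. The interior result is classical, but the Cauchy data live on $\{p = p_0\}$, so analyticity must be known right up to the bed in order to apply Cauchy-Kowalevski. Invoking the boundary version of Morrey's analytic regularity theorem requires verifying analyticity of the nonlinear boundary operators in the appropriate jets, which is routine once no-stagnation is used to stay in the open set $\{h_p > 0\}$; without this upgrade one cannot even formulate the Cauchy problem analytically. Once this regularity is in hand, the remaining steps are a fairly clean combination of Cauchy-Kowalevski uniqueness and analytic continuation.
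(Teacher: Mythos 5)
Your proposal is correct and follows essentially the same route as the paper: both rewrite \eqref{h eqn} as the first-order system \eqref{1storder} with Cauchy data on $\{p=p_0\}$ furnished by \eqref{pressure transfer}, apply Cauchy--Kowalevski near the bed, invoke analyticity of the solution up to $\overline{\mathcal{R}}$ (the paper cites \cite[Theorem 5.1]{wangregularity2013} where you invoke Morrey-type boundary regularity), and conclude by unique continuation of real analytic functions. Your write-up is somewhat more explicit about why the Cauchy data and the boundary operators are analytic, but the argument is the same.
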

\begin{proof}
The Cauchy-Kowalevski theorem applied to system \eqref{1storder} implies the existence and uniqueness of $F, G, H$, which are real analytic in some open neighborhood $\mathcal{N}$ of the bed $\{p = p_0\}$. Hence, in view of \eqref{newvar}, $h, h_p, h_q$ are also real analytic in $\mathcal{N}$. On the other hand, under the regularity hypotheses above,  all solutions of the height equation \eqref{heighteq general} are real analytic in $\overline{\mathcal{R}}$ (cf. \cite[Theorem 5.1]{wangregularity2013}).   Therefore the unique continuation of real analytic functions ensures the uniqueness of the solution $h$ to \eqref{h eqn} with Neumann data on the bottom given by the pressure from \eqref{pressure transfer}.
\end{proof}

%

In contrast to the case of solitary waves, a fairly robust existence theory for large-amplitude periodic steady stratified waves has been developed in recent years (cf. \cite{walsh2009stratified,escher2011stratified,henry2013local,henry2013global,walsh2014local,walsh2014global}). Since the Cauchy-Kowalevski theory is local, it can be applied just as well on a periodic domain.  Thus, in order to adapt the argument of Theorem \ref{thm_determinability} to periodic traveling waves, it remains only to show that the pressure trace on the bed furnishes Neumann data for $h$ there.  Unfortunately, because this setting lacks an upstream state, we will need some additional information.  

More precisely, observe that by combining \eqref{Bernoulli in pq} with \eqref{Bernoulli bottom general} for a general Bernoulli function $\beta$, we obtain the identity 
\be h_p(q,p_0) = \left[ 2 \left(  E|_{\psi = 0} - \int_0^{p_0} \beta(p) \, dp + g \rho(p_0) d - P(q, p_0)  \right) \right]^{-1/2}. \label{hp from P periodic} \ee
So long as we know $E|_{\psi = 0}$, we can proceed as before.  This leads to the following result.  
\begin{cor}\label{cor_periodic}
Let $h \in C^{2,\alpha}(\overline{\mathcal{R}})$ be a periodic solution of the height equation \eqref{heighteq general} for a stable  streamline density function $\rho \in C^\omega([p_0,0])$,  Bernoulli function $\beta \in C^\omega([0,|p_0|])$, and with wave speed $c$. Then $h \in C^\omega(\overline{\mathcal{R}})$.  Moreover, $h$ is uniquely determined by the trace of the pressure $P$ on the bed and the value of $E$ on the free surface. 
\end{cor}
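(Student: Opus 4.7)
The plan is to mimic the strategy used for \thmref{thm_determinability}, with two modifications dictated by the periodic setting: the Bernoulli constant on the free surface now has to be supplied as data (instead of being computable from the upstream state), and the unique continuation argument at the end takes place on a periodic strip rather than an infinite one.

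First I would establish real analyticity. Under the hypotheses $\rho,\beta \in C^\omega$ and $h \in C^{2,\alpha}(\overline{\mathcal{R}})$, the elliptic regularity result cited in the proof of \thmref{thm_determinability} (Wang's theorem, \cite[Theorem 5.1]{wangregularity2013}) applies verbatim since it is local in nature, yielding $h \in C^\omega(\overline{\mathcal{R}})$. Periodicity plays no role in that step.

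Next I would translate the two given pieces of data, the trace $P(\cdot,p_0)$ and the value $E|_{\psi=0}$, into Cauchy data for the first-order system \eqref{1storder} on $\{p = p_0\}$. The Dirichlet condition \eqref{h bed cond} gives $H(q,p_0)=0$ and hence $F(q,p_0)=h_q(q,p_0)/h_p(q,p_0)=0$. For the remaining variable $G$, I would invoke identity \eqref{hp from P periodic}, which was precisely derived so as not to rely on an upstream state: since $E$ is constant along streamlines, $E|_{\psi=0}$ is known, and hence
\[
G(q,p_0) \;=\; \frac{1}{h_p(q,p_0)} \;=\; \left[\,2\!\left(E|_{\psi=0} - \int_0^{p_0}\!\beta(p)\,dp + g\rho(p_0)d - P(q,p_0)\right)\right]^{1/2}
\]
is determined by the given data. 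The no-stagnation hypothesis, which in the periodic existence theory is part of the standing assumptions, guarantees $h_p>0$ so the square root is unambiguous and $G$ is real analytic in $q$.

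With analytic Cauchy data in hand, the Cauchy--Kowalevski theorem applied to \eqref{1storder} produces a unique analytic triple $(F,G,H)$ in some open neighborhood $\mathcal{N}$ of the bed. By \eqref{newvar} this pins down $h$ on $\mathcal{N}$. To conclude, I would use unique continuation of real analytic functions on the connected domain $\overline{\mathcal{R}}$: any two periodic analytic solutions of \eqref{heighteq general} sharing the same trace $P(\cdot,p_0)$ and the same $E|_{\psi=0}$ agree on $\mathcal{N}$, and therefore on all of $\overline{\mathcal{R}}$. Periodicity of $h$ is preserved trivially since the data are periodic and the system is translation invariant in $q$.

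The main obstacle I foresee is mostly conceptual rather than technical: making clear why the extra piece of data $E|_{\psi=0}$ is genuinely necessary in the periodic setting, given that in the solitary case it was automatically determined by the far-field horizontal velocity via \eqref{Bernoulli top}. Once this is recognized, the Cauchy--Kowalevski plus analytic continuation machinery transfers without further complication.
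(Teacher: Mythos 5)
Your proposal is correct and follows essentially the same route as the paper: the authors also derive $h_p(\cdot,p_0)$ from $P(\cdot,p_0)$ and the supplied Bernoulli constant $E|_{\psi=0}$ via \eqref{hp from P periodic}, then reuse the Cauchy--Kowalevski argument for the first-order system \eqref{1storder} together with Wang's analyticity result and unique continuation of real analytic functions, noting that the local nature of Cauchy--Kowalevski makes the periodic setting immaterial. Your identification of $E|_{\psi=0}$ as the extra datum replacing the upstream state is exactly the point the paper emphasizes.
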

\begin{rem}
There are many situations in which it is possible to find $E|_{\psi=0}$ from other information about the flow.  For instance,  any one of the following is enough to determine $E|_{\psi=0}$:
\begin{itemize}
\item[(i)] the value of $u$ at the crest and the height of the crest; or
\item[(ii)] the value of $u$ at the trough and the height of the trough; or
\item[(iii)] the average value of the kinetic energy ${\rho(0)\over2} |(u-c,v)|^2$ on the surface.
\end{itemize}
For (i), we just evaluate Bernoulli's law at the crest (where $v = 0$).  Likewise, for (ii), we evaluate Bernoulli at the trough.  For (iii), we
take the mean value on the surface and use the fact that $\eta$ has mean $0$. 
\end{rem}

These determinability results can be easily extended to layer-wise smooth flows.  Note that the interior interfaces are streamlines, hence they correspond to straight lines $\{ p = p_i\}$ in the semi-Lagrangian variables.  The first-order system \eqref{1storder} will hold in each strip $\mathcal{R}_i$. Beginning in the strip directly above the bed, we can argue as in Theorem \ref{thm_determinability} and Corollary  \ref{cor_periodic} to show that  $h \in C^\omega(\overline{\mathcal{R}_1})$ and is determined there uniquely by the prescribed data.  $h$ is continuous over the interface and indeed $q \mapsto h(q, \cdot)$ is real analytic.  On the other hand, the jump condition \eqref{h jump E} allows us to determine $\partial_p h_2(\cdot, p_1)$.  Thus we have a full set of Cauchy data for the first-order problem in $\mathcal{R}_2$.  Applying Cauchy-Kowalevski once more allows us to conclude $h \in C^\omega(\overline{\mathcal{R}_2})$.  Iterating this procedure, we get the following result.  
\begin{cor}\label{cor_layered}
Let $h \in C^{0,\alpha}(\overline{\mathcal{R}})$ be a solitary wave solution of \eqref{heighteq general} with wave speed $c$ such that $h_i \in C^{2,\alpha}(\overline{\mathcal{R}_i})$.  Suppose that for each $i$, $\rho_i \in C^\omega([p_{i-1}, p_i])$ and $\beta_i \in C^\omega([|p_i|, |p_{i-1}|])$. Then $h$ is real analytic in each layer $\mathcal{R}_i$, and it is uniquely determined by the trace of the pressure on the bed. \end{cor}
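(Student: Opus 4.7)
The plan is to iterate the Cauchy--Kowalevski argument of Theorem~\ref{thm_determinability} one layer at a time, moving upward from the bed. Since each interior interface is a streamline, it corresponds to a horizontal line $\{p = p_i\}$ in the semi-Lagrangian variables; hence every $\mathcal{R}_i$ is a horizontal strip on which $h_i$ satisfies the first-order system \eqref{1storder} with real analytic coefficients, so the only real content is the propagation of Cauchy data across each interface.

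In the bottom strip $\mathcal{R}_1$ there is nothing new to do: the zero Dirichlet condition \eqref{h bed cond} and the Neumann condition \eqref{pressure transfer} furnished by the bed pressure together form a real analytic Cauchy dataset for \eqref{1storder} at $\{p = p_0\}$, so the proof of Theorem~\ref{thm_determinability} carries over verbatim and gives $h_1 \in C^\omega(\overline{\mathcal{R}_1})$ together with uniqueness.

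The key inductive step is to pass from $\mathcal{R}_i$ to $\mathcal{R}_{i+1}$ by converting the already-established real analytic trace of $h_i$ on $\{p = p_i\}$ into a real analytic Cauchy dataset for $h_{i+1}$ on the same line. Continuity of $h$ across the interface supplies $h_{i+1}(\cdot, p_i) = h_i(\cdot, p_i)$ and consequently $\partial_q h_{i+1}(\cdot, p_i) = \partial_q h_i(\cdot, p_i)$, both real analytic. The constant $Q_i$ is determined from $\rho$ and $c$ by \eqref{Q}, so after rearranging \eqref{h jump E} the quantity $(\partial_p h_{i+1})^2$ is expressed algebraically in terms of already-known real analytic quantities; the no-stagnation requirement $\partial_p h_{i+1} > 0$ then fixes the sign of the square root, yielding an analytic $\partial_p h_{i+1}(\cdot, p_i)$. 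Applying Cauchy--Kowalevski to \eqref{1storder} on $\mathcal{R}_{i+1}$ with this dataset, and then invoking the interior analytic regularity result and unique continuation exactly as in Theorem~\ref{thm_determinability}, gives $h_{i+1} \in C^\omega(\overline{\mathcal{R}_{i+1}})$ together with uniqueness.

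The main technical point I foresee is verifying that the right-hand side of the rearranged jump condition is strictly positive and bounded away from zero, so that the square root defining $\partial_p h_{i+1}$ preserves analyticity. This should follow from the no-stagnation hypothesis (which keeps $\partial_p h_i$ bounded away from zero and infinity, including in the limit $|q|\to\infty$ via \eqref{asymptotic h}) together with the stable stratification $\rho_{i+1} \le \rho_i$ and the sign of $Q_i$ dictated by the Bernoulli balance at infinity \eqref{Q}. Once this positivity is secured, the induction proceeds through the $N-1$ interfaces and yields the corollary.
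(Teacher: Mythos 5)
Your proposal follows essentially the same route as the paper: iterate the Cauchy--Kowalevski argument of Theorem~\ref{thm_determinability} strip by strip, using the continuity of $h$ (hence of $h_q$) across each interface together with the jump condition \eqref{h jump E} (with $Q_i$ fixed by \eqref{Q}) to obtain a full real analytic Cauchy dataset for \eqref{1storder} on $\{p=p_i\}$, then invoke interior analyticity and unique continuation in the next layer. The positivity you worry about is automatic, since the rearranged jump condition is being evaluated on an actual solution for which $\partial_p h_{i+1}>0$ already holds.
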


\subsection{Sobolev setting} In this section, we considerably strengthen the above results by relaxing the analyticity requirement to a more physically reasonable degree of smoothness.  Our argument is built off of the following result on the strong unique continuation for a general elliptic equation set in the plane.
\begin{thm}\cite[Theorem 1]{Alessandrini2012}\label{thm_strong continuation}
Given a bounded connected open set $\Omega \subset \R^2$. Let $u\in W^{1,2}(\Omega)$ be a weak solution to the elliptic equation
$$
Lu: = (a^{ij} u_{x_i} + a^j u)_{x_j} + b^i u_{x_i} + cu  = 0  \quad \textrm{in } \Omega,
$$
where 
\begin{enumerate}
\item[(i)] $L$ is uniformly elliptic in the sense that there exists a $K>0$ such that 
$$
0< K^{-1} |\zeta|^2 \leq a^{ij}(x) \zeta_i \zeta_j \leq K |\zeta|^2, \quad \textrm{ for all } \zeta\in \R^2, \textrm{and for a.e. } x\in \Omega; and 
$$
\item[(ii)] $a^{ij} \in L^\infty(\Omega)$, $a^i,\ b^i \in L^q(\Omega)$, $c\in L^{q/2}(\Omega)$ for some $q>2$.
\end{enumerate}
Then the operator $L$ has the strong unique continuation property in the sense that if a solution $u$ has a zero of infinite order at a point $x_0 \in \Omega$, i.e., there exists $\delta>0$ such that for every integer $N\geq0$
$$
\int_{B_R(x_0)} |u|^2 \ dx \leq c_N r^N, \quad \textrm{ for all } R<\delta,
$$
then $u\equiv 0$ in $\Omega$.
\end{thm}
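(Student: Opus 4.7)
The plan is to exploit the two-dimensional geometry by recasting the second-order divergence-form equation as a first-order Beltrami-type system in the complex plane, and then invoking the Bers--Vekua similarity principle to reduce strong unique continuation for $u$ to the classical fact that a holomorphic function vanishing to infinite order at a point is identically zero on the connected component containing that point.

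First, I would introduce the vector field $X^j := a^{ij} u_{x_i} + a^j u$, so that the equation reads $\partial_j X^j = -b^i u_{x_i} - c u$. Because $\Omega \subset \R^2$ is planar, one can construct a conjugate function $v \in W^{1,2}_{\textrm{loc}}$ by integrating the $90^\circ$-rotated vector field $(-X^2, X^1)$ together with a correction term absorbing the lower-order right-hand side; exactness of the rotated plus corrected field follows from the divergence relation above. A direct computation then shows that $f := u + i v$ satisfies a generalized Beltrami equation of the form
$$\partial_{\bar z} f = \mu \, \partial_z f + \nu \, \overline{\partial_z f} + A f + B \bar f,$$
where uniform ellipticity of $L$ translates (after the standard algebraic manipulation relating $a^{ij}$ to $\mu, \nu$) into $\|\mu\|_\infty + \|\nu\|_\infty \le k < 1$, and hypothesis (ii) places $A, B$ in $L^{q/2}_{\textrm{loc}}$ with $q/2 > 1$.

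Next, I would apply the Bers--Vekua representation theorem to write $f = e^{\omega} F$ in a disk around $x_0$, where $F$ is quasi-regular (in fact holomorphic, after a quasi-conformal change of coordinates that straightens the principal part) and $\omega \in L^\infty \cap C^{0,\alpha}$, with the H\"{o}lder regularity inherited from the Morrey embedding available when $q > 2$. Since $e^\omega$ is bounded away from zero, the infinite-order vanishing of $u$, and hence of $f$, at $x_0$ transfers verbatim to $F$; holomorphicity of $F$ then forces $F \equiv 0$ on the disk, and consequently $u \equiv 0$ in an open neighborhood of $x_0$. A standard weak unique continuation argument (itself a consequence of the same representation, since zeros of non-trivial holomorphic functions are isolated) propagates the vanishing throughout the connected set $\Omega$. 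The main obstacle I anticipate is the delicate borderline integrability of the lower-order terms: constructing the correction in the conjugate function $v$ with enough regularity, and showing that the exponent $\omega$ obtained from the similarity principle is genuinely bounded, both rely crucially on $q > 2$ through $L^p$-mapping properties of the Cauchy transform and the Beurling--Ahlfors operator applied to $A$ and $B$, and this is the step where the sharp hypothesis $c \in L^{q/2}$ is used most tightly.
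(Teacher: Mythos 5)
This theorem is not proved in the paper at all: it is quoted verbatim from Alessandrini \cite[Theorem 1]{Alessandrini2012} and used as a black box in the proof of Theorem \ref{thm_determinability normal derivative}, so there is no internal proof to compare against. That said, your outline does follow the strategy of the cited reference --- generalized stream function, reduction to a Vekua-type first-order system, similarity principle, Stoilow factorization --- so the architecture is the right one.

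The gap is in the treatment of the zeroth-order coefficient. You assert that the similarity exponent $\omega$ lies in $L^\infty \cap C^{0,\alpha}$ ``by Morrey embedding available when $q>2$.'' Morrey's embedding $W^{1,s}\hookrightarrow C^{0,1-2/s}$ in the plane requires $s>2$, whereas the contribution of $c$ to the coefficients of the Vekua system (via $|cu|\le |c|\,|f|$) is controlled only in $L^{q/2}$, and $q/2$ may be any number greater than $1$, in particular $\le 2$. Consequently the Cauchy transform of that part of the coefficient need not be bounded, $e^{\omega}$ need not be bounded away from zero, and the claim that infinite-order vanishing ``transfers verbatim'' from $f$ to the quasiregular factor $F$ does not follow. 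A correct argument must either (a) use a refined similarity principle in which $\omega$ is only exponentially integrable (or of small BMO norm), and then transfer the infinite-order vanishing through a H\"older inequality, exploiting that the hypothesis is quantitative ($\int_{B_R(x_0)}|u|^2\,dx \le c_N R^N$ for every $N$); or (b) absorb $b^i u_{x_i}+cu$ by solving a divergence equation, in which case the resulting term enters the first-order system as an inhomogeneity in $L^{2q/(4-q)}$ that is \emph{not} of the form $\alpha f + \beta \bar f$, and your sketch does not explain how the similarity principle applies to an inhomogeneous equation. Relatedly, the construction of the conjugate $v$ ``with a correction term absorbing the lower-order right-hand side'' is asserted rather than carried out, and it is exactly this correction that produces the problematic term. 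So the proposal identifies the right road but does not get past the one genuinely delicate step, which is also the reason the integrability thresholds in hypothesis (ii) are what they are.
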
 

With this result in place, we can now state and prove our main theorem on determinability.  
\begin{thm}\label{thm_determinability normal derivative}
Let $r>4$ and $h \in W^{3,r}(\mathcal{R})$ be a solution of \eqref{h eqn} representing a solitary water wave with a stable streamline density stratification $\rho \in W^{2,r}([p_0, 0])$ and a Bernoulli function $\beta \in W^{1,r}([0, |p_0|])$, and such that the wave speed exceeds the horizontal fluid velocity. Then $h$ is uniquely determined by the normal derivative $h_p$ on the flat bed. 
\end{thm}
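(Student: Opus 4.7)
The plan is to reduce the claim to an application of the strong unique continuation principle, Theorem~\ref{thm_strong continuation}. Given two solutions $h^{(1)}, h^{(2)} \in W^{3,r}(\mathcal{R})$ of \eqref{h eqn} sharing the same data and the same trace $h_p|_{\{p = p_0\}}$, both satisfy the bed condition $h = 0$, so the difference $w := h^{(1)} - h^{(2)}$ has vanishing Cauchy data on $\{p = p_0\}$. I will exhibit a linear uniformly elliptic operator $L$ on $\mathcal R$ annihilating $w$, verify that its coefficients are admissible for Theorem~\ref{thm_strong continuation}, and then extend $w$ by zero across the bed to produce a function which vanishes on an open set and hence has a zero of infinite order; strong unique continuation will then force the extended function --- and therefore $w$ itself --- to vanish identically.

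To derive $L$, I introduce the convex combination $h^{(\tau)} := \tau h^{(1)} + (1 - \tau) h^{(2)}$ and integrate the Fr\'echet derivative of the quasilinear height operator $F$ in \eqref{h eqn} along $\tau$:
\begin{equation*}
L := \int_0^1 F'(h^{(\tau)}) \, d\tau, \qquad Lw = F(h^{(1)}) - F(h^{(2)}) = 0.
\end{equation*}
The no-stagnation hypothesis $\partial_p h^{(i)} > 0$ is preserved under convex combinations, so the discriminant of the principal symbol of $F'(h^{(\tau)})$ equals $(\partial_p h^{(\tau)})^2 > 0$ and $L$ is uniformly elliptic on bounded subdomains. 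Since $r > 4$, the two-dimensional Sobolev embedding $W^{3,r} \hookrightarrow C^{2,1-2/r}_{\loc}$ shows each $h^{(\tau)}$ is locally $C^{2,\alpha}$; the coefficients of $L$, which are polynomial expressions in $h^{(\tau)}$, $\nabla h^{(\tau)}$, and the fixed data $\rho$, $\rho_p$, $\mathring h$, therefore lie in $C^\alpha_{\loc}$ and satisfy the $L^\infty$ and $L^q$ integrability requirements of Theorem~\ref{thm_strong continuation} on any bounded subdomain.

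Next, extend $w$ by zero to an enlarged strip $\mathbb R \times (p_0 - \delta, 0)$, calling the result $\tilde w$, and extend the coefficients of $L$ by constant extension in $p$, preserving ellipticity and the $C^\alpha_{\loc}$ regularity. The full gradient $\nabla w$ vanishes on $\{p = p_0\}$ --- the tangential component because $w = 0$ along the bed, the normal component by hypothesis --- so when one tests the extended equation against a smooth compactly supported function in the enlarged strip and integrates by parts on the upper and lower halves, the conormal boundary integrals at $\{p = p_0\}$ all vanish, and $\tilde w \in H^1_{\loc}$ is a genuine weak solution of the extended equation. For any $x_0 = (q_0, p_0)$ on the bed and any small ball $B$ centered at $x_0$ inside the enlarged strip, $\tilde w \equiv 0$ on the open lower portion $B \cap \{p < p_0\}$, so $\tilde w$ has a zero of infinite order at every point there; Theorem~\ref{thm_strong continuation} then gives $\tilde w \equiv 0$ on $B$, i.e.~$w \equiv 0$ in a one-sided neighborhood of the bed. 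The subset of $\mathcal R$ on which $w$ vanishes in a neighborhood is open by definition and closed by a second local application of strong unique continuation; connectedness of $\mathcal R$ therefore forces $w \equiv 0$ globally.

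The main obstacle lies in the linearization step: one must verify that all coefficients arising in $L$ satisfy the $L^\infty$, $L^q$, $L^{q/2}$ hypotheses of Theorem~\ref{thm_strong continuation}, and the regularity threshold $r > 4$ is dictated precisely by this accounting, since second derivatives of $h^{(\tau)}$ appear as coefficients on $w$, $w_q$, $w_p$ after differentiating the nonlinearity in \eqref{h eqn}. A secondary technical point is the verification that $\tilde w$ is a weak solution of the extended equation across $\{p = p_0\}$, which reduces to the vanishing of the conormal boundary terms and is immediate from the Cauchy data. Notably, neither the Bernoulli boundary condition on $\{p = 0\}$ nor the asymptotic condition at infinity enters the argument: propagation of the vanishing of $w$ proceeds from the bed through the strip using only the ellipticity of $L$ and the zero Cauchy data.
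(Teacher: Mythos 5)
Your proposal is correct and follows essentially the same route as the paper: form the difference of two solutions, show it satisfies a linear uniformly elliptic equation in divergence form with coefficients admissible for Theorem~\ref{thm_strong continuation}, extend by zero across the bed using the vanishing Cauchy data, and invoke strong unique continuation to propagate the vanishing through the strip. The only cosmetic difference is that you write the linearized operator as $\int_0^1 F'(h^{(\tau)})\,d\tau$ along the convex path, whereas the paper records the coefficients explicitly with the principal part frozen at one of the two solutions; both yield the same ellipticity and regularity conclusions.
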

\begin{proof}
Let $h, \tilde{h}\in W^{3,r}(\mathcal{R})$ be two solutions to \eqref{h eqn} with identical Neumann data on the bed:
$$ h_p(\cdot, p_0) = \tilde{h}_p(\cdot, p_0) \qquad \textrm{on } \mathbb{R}.$$
The difference $u := h - \tilde{h}$  satisfies an elliptic equation (see also \cite{walsh2009symmetry})
\begin{equation*}
\mathscr{L}u := (a^{ij} u_{x_i})_{x_j} + b^i u_{x_i} + cu  = 0 \quad \textrm{ in } \mathcal{R},
\end{equation*}
where
\begin{align*}
a^{11} & := 1+ h^2_q, \quad a^{12} = a^{21} = - h_ph_q, \quad a^{22} = h^2_p,\\
b^1 & := \tilde{h}_{pp} (h_q + \tilde{h}_q) - 2 h_p \tilde{h}_{pq} - h_p h_{pq} + h_{q} h_{pp},\\
b^2 & := \tilde{h}_{qq}(h_p + \tilde{h}_p) - 2 \tilde{h}_q \tilde{h}_{pq} + \beta(-p) (h^2_p + h_p \tilde{h}_p + \tilde{h}^2_p) - g \rho_p (\tilde{h} - d(\tilde{h})) (h^2_p + h_p \tilde{h}_p + \tilde{h}^2_p) \\
& \quad - h_q h_{pq} + h_p h_{qq}\\
c & := - g \rho_p h^3_p,
\end{align*}
and 
$$
u = u_p = 0, \quad \textrm{on } \{p = p_0\}.
$$
The above boundary condition guarantees that $u$ can be extended to a $W^{1,r}$ function defined on the half-plane $\{ p< 0\}$ that vanishes on $\{ p < p_0 \}$. Also, since $h, \tilde{h}\in W^{3,r}(\mathcal{R})$, from the trace theorem and fractional Sobolev embedding theorem (cf., e.g.,  \cite[Theorem 8.2]{FractionalSobolev}) we know that 
$$\nabla^2 h, \nabla^2 \tilde{h} \in W^{1/2,r}(\{p=0\} \cup \{ p = p_0\}) \subset C^{0, {r-2\over 2r}}(\{p=0\} \cup \{ p = p_0\}).$$ 
Similarly, we can make sense of $\rho_p$ and $\beta$ on $\{ p = 0\}$ and $\{p = p_0 \}$. This way, the functions $h, \tilde{h}, \rho$ and $\beta$ can be extended outside $\mathcal{R}$ while maintaining their regularity and ensuring $h_p > 0$ in the strong sense. Therefore, from the divergence structure of the operator $\mathscr{L}$, we see that the extended $u$ also satisfies $\mathscr{L}u = 0$. This, combined with Sobolev embedding, implies that the coefficients in $\mathscr{L}$ satisfy the the conditions of Theorem \ref{thm_strong continuation}, and hence $\mathscr{L}$ has the strong unique continuation property. Note that Theorem \ref{thm_strong continuation} is a local result. Thus we may apply it on any finite strip of the fluid domain to show that the restriction of $u$ there vanishes identically.  It follows that $u \equiv 0$ on the entire strip, completing the proof of theorem.
\end{proof}
As discussed in the previous subsection, $h_p(\cdot, p_0)$ is determined by $P({\cdot, p_0})$ through \eqref{pressure transfer}. Thus Theorem \ref{thm_determinability} can be reinterpreted as a statement about unique determinability of $h$ from the pressure trace.  
\begin{thm}\label{thm_determinability general}
Let $r>2$ and $h \in W^{3,r}(\mathcal{R})$ be a solution of \eqref{h eqn} representing a solitary water wave with a stable streamline density stratification $\rho \in W^{2,r}([p_0, 0])$, a Bernoulli function $\beta \in W^{1,r}([p_0, 0])$.  Fix the wave speed $c$, and the downstream/upstream horizontal velocity on the surface. Then $h$ is uniquely determined by the trace of the pressure $P$ on the bed. In particular, the pressure function on the flat bed uniquely determines the free surface profile.
\end{thm}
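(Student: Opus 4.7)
The plan is to reduce Theorem \ref{thm_determinability general} directly to Theorem \ref{thm_determinability normal derivative}: once I verify that the Neumann trace $h_p(\cdot, p_0)$ is a pointwise functional of the pressure trace $P(\cdot, p_0)$ (with all other arguments fixed by the prescribed data), uniqueness of $h$ from the pressure follows automatically from uniqueness of $h$ from its normal derivative on the bed. The bridge is the pressure transfer identity \eqref{pressure transfer}, whose derivation has already been sketched at the start of this section; the task here is to confirm it is meaningful in the $W^{3,r}$ setting.

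First I would rewrite Bernoulli's law \eqref{defE} in the semi-Lagrangian coordinates to obtain \eqref{Bernoulli in pq}. Under the standing hypotheses $h \in W^{3,r}(\mathcal{R})$, $\rho \in W^{2,r}$, $\beta \in W^{1,r}$ with $r > 2$, Sobolev embedding gives $h, h_q, h_p \in C^{1,\alpha}(\overline{\mathcal{R}})$ and allows us to make classical sense of $P$ via \eqref{Bernoulli in pq}. Evaluating this identity on $\{p = p_0\}$ and using $h(\cdot, p_0) = 0$ from \eqref{h bed cond} yields
\begin{equation*}
P(q, p_0) + \frac{1}{2 h_p(q, p_0)^2} - g\rho(p_0) d \;=\; E\big|_{p = p_0}.
\end{equation*}
The definition \eqref{defbeta} of $\beta$ integrated from $0$ to $p_0$ expresses $E|_{p=p_0}$ as $E|_{\psi=0} - \int_0^{p_0}\beta(p)\,dp$, and the far-field asymptotics \eqref{upstream condition} together with \eqref{Bernoulli top} pin down $E|_{\psi=0} = P_{\textrm{atm}} + \tfrac{1}{2}\rho(0)(c - \mathring{u}(0))^2$ from the prescribed wave speed and upstream surface velocity. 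Combined with the no-stagnation condition $h_p > 0$, which selects the positive branch, one recovers exactly the identity \eqref{pressure transfer}.

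Now let $h, \tilde h \in W^{3,r}(\mathcal{R})$ be two solutions of \eqref{h eqn} sharing the same pressure trace on the bed. Since the right-hand side of \eqref{pressure transfer} involves only the common data $\rho, \beta, c, \mathring{u}(0), P_{\textrm{atm}}, d$ and $P(\cdot, p_0)$, it follows that $h_p(\cdot, p_0) = \tilde h_p(\cdot, p_0)$ on $\{p = p_0\}$. Theorem \ref{thm_determinability normal derivative} then delivers $h \equiv \tilde h$ throughout $\mathcal{R}$. The free-surface profile is recovered from $h$ via the relation $\eta(x) = h(q, 0) - d$ composed with the unique inversion $q \mapsto x$ determined by $h$, so the pressure trace also uniquely determines $\eta$.

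There is no genuine obstacle once Theorem \ref{thm_determinability normal derivative} is in hand; the step is essentially algebraic. The only points demanding minor care are verifying positivity of the quantity inside the square root in \eqref{pressure transfer} (which is automatic from \eqref{Bernoulli in pq} and $h_p > 0$) and checking that traces on $\{p = p_0\}$ of the relevant Sobolev quantities exist in a pointwise sense — both handled by the trace theorem and Sobolev embedding for $r > 2$. If the regularity threshold required by the strong unique continuation step in Theorem \ref{thm_determinability normal derivative} is higher than $r > 2$, this reduction would simply inherit the stronger assumption, but the structure of the argument is unaffected.
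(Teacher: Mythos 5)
Your proposal is correct and follows exactly the paper's own route: the paper proves this theorem precisely by observing that \eqref{pressure transfer} converts the pressure trace into the Neumann data $h_p(\cdot,p_0)$ and then invoking Theorem \ref{thm_determinability normal derivative}. Your closing caveat about the regularity threshold is apt, since the cited unique-continuation theorem is stated for $r>4$ while this statement says $r>2$, and the reduction indeed just inherits whatever exponent that step requires.
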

\begin{rem} \label{henry remark}
The same argument applies to the regime considered by Henry \cite{henry2013pressure}, namely constant density flow with a general vorticity distribution. In that case, the height equation is  
\be\label{rotational height} \left \{ \begin{array}{lll}
(1+h_q^2)h_{pp} + h_{qq}h_p^2 - 2h_q h_p h_{pq} = -h_p^3 \gamma(-p), & p_0 < p < 0, \\
1+h_q^2 + h_p^2(2g h- Q)  = 0, & p = 0, \\
h = 0, & p = p_0, \end{array} \right. \ee
where $\gamma$ is the vorticity function. Clearly $\gamma$ plays a nearly identical role as $\beta$ does in \eqref{h eqn}, and so the arguments above go through with only minor alteration.
\end{rem}

\section{Reconstruction for layer-wise constant density irrotational waves} \label{streamline recovery section}
Up to this point, we have proved that, in a number of physical regimes, the pressure trace on the bed uniquely determines a traveling wave in a stratified water.  The goal of the present section is to develop an explicit iteration scheme that reconstructs the internal interfaces of a layer-wise constant density and irrotational solitary wave given the pressure data.

\subsection{Surface reconstruction from pressure}\label{subsec_reconstruction} In Section \ref{subsec_conformal}, we introduced the hodograph transform and the reformulation of the height equation in $(t,s)$-coordinates.  Recall that the density is constant within each layer and we denote the height function as $H = H(t,s)$ when working in the conformal variables. Reading off the Bernoulli law in the bottom layer from far-field data yields
\begin{equation}\label{Bernoulli bottom}
{P_{\textrm{bot}} + {c^2 \over 2 \LB(\p_tH_1)^2 + (\p_sH_1)^2\RB} - g\varrho_1 d = P_{\textrm{atm}} + \int^0_{p_0} g \rho(p) \ dp + {\varrho_1 c^2 \over 2} - g \varrho_1 d,}
\end{equation}
where the second term on the right-hand side comes from inverting the streamline coordinates from the Eulerian one at infinity. 
The boundary condition on $H_1$, \eqref{chain rule}, and the fact that $H_s > 0$ then give a formula for $\p_sH_1(\cdot, s_0)$
\begin{equation}\label{H_s on bottom}
\p_sH_1(t, s_0) = {c \over \sqrt{\varrho_1} \sqrt{c^2 - 2 \mathfrak{p}(t)}},
\end{equation}
where
\begin{equation}\label{bottom pressure} 
\mathfrak{p}(t) := \dfrac{P_{\textrm{bot}}({{x_0}(t)}) - P_{\textrm{atm}}  }{\varrho_1} - {1\over \varrho_1}\int^0_{p_0} g \rho(p) \ dp.
\end{equation}
Recall that $x_0$ is the inverse of $x \mapsto \varphi_1(x, -d)$.

The surface reconstruction problem can then be stated as follows:  From the form of the wave at infinity as described by the data $\mathring{H}$, $d$, $\rho$, and $c$, and the pressure trace on the bed as described by $\mathfrak{p}$, find a parameterization for each of the internal interfaces as well as the air--sea interface. 

\begin{rem} \label{p(t) from Pbot remark}  $\mathfrak{p}$ is a more convenient quantity to work with mathematically, but will not be directly observable in most physical scenarios.  However, it can be determined uniquely from $P_{\textrm{bot}}$, the trace of the pressure on the bed in the original variables.   This follows from several observations.  First, note that the vertical velocity vanishes on the bed, and thus from Bernoulli's law  we find that:
$$ \varphi_x(x,-d) = -\left[ P_{\textrm{atm}} - P_{\textrm{bot}}(x) +   \int^0_{p_0} g \rho(p) \ dp + {\varrho_1 c^2 \over 2} \right]^{1/2}.$$
Here we are using the fact that the absence of stagnation implies  $\varphi_x(0,-d) = \sqrt{\rho}_1(u-c) < 0$, which selects the negative branch of the square root.  Thus $x_0$ is characterized as the solution to the ODE initial value problem
$$ \left\{ \begin{split}  \dot{x}_0(t) &= \frac{1}{\varphi_x(x_0(t))} = - \left[ P_{\textrm{atm}} - P_{\textrm{bot}}(x_0(t)) +  \int^0_{p_0} g \rho(p) \ dp + {\varrho_1 c^2 \over 2} \right]^{-1/2} \\
x_0(0) &= 0. \end{split} \right.$$
The condition $x_0(0) = 0$ follows from our choice to take $\varphi(0,-d) = 0$.  Once $x_0$ is known, $\mathfrak{p}$ is determined uniquely by \eqref{bottom pressure}. 
\end{rem}

It is important to note that one cannot hope to choose $\mathfrak{p}$ arbitrarily and have it correspond to the pressure trace of an actual traveling wave.   We will require a posteriori information about $H$; that is, we need to make use of the fact that it is known that we are observing a solution, and that solutions of the system have a certain regularity and decay at infinity.   Otherwise, the ill-posedness of the problem cannot be avoided.

Fix an $\alpha \in (0,1)$.  We will assume that the wave to be reconstructed lies in the set 
\be \label{def H set} \begin{split} \mathscr{H}  = \mathscr{H}(\rho, \mathring{H}, c, d)  &:= \{ H \in C^{0,\alpha}(\overline{\mathpzc{T}}) :  H - \mathring{H}, H_t \textrm{ exponentially localized in $t$}, \\ 
& \qquad\qquad H \textrm{ a solution of \eqref{conformal height eqn}} \}.\end{split} \ee
The existence theory of Turner \cite{turner1981internal,turner1984variational} shows that $\mathscr{H}$ is non-empty if the wave speed is supercritical (cf. Section \ref{continuous section}).  In  \cite{amick1986global}, moreover, Amick and Turner prove that it includes waves of large amplitude in the case of a two-layer system.  
 
With this degree of regularity, our procedure holds in the strong sense; with less decay at infinity, we obtain only a distributional representation for the wave.   Of course, since $H \in \mathscr{H}$ is harmonic in each layer, it is $C^\infty(\overline{\mathpzc{T}_i})$ for $i = 1, \ldots, N$.   In fact, the structure of the equation and regularity theory for transmission boundary conditions (cf., e.g., \cite[Chapter 16]{ladyzhenskaya1968linear}) imply that $\p_t H \in C^{0,\alpha}(\overline{\mathpzc{T}})$.



\subsection{Jump condition in conformal coordinate system}\label{subsec_jump}
One of the main technical difficulties for multi-layered flows  is that the conformal coordinates behave badly on the image of the interfaces.  Suppose that $F \in C^0(\overline{\Omega})$ and let $f := F \circ \mathscr{K}^{-1} $.  Then, in general, $f \not\in C^0(\overline{\mathpzc{T}})$ since
$$ f_{i+1}(t,s_i) \neq f_{i}(t,s_i).$$
In other words, while each $\mathscr{K}_i$ is a conformal mapping, $\mathscr{K}$ itself is not a homeomorphism.  Of course, this stems from the fact that $\varphi$ may have a jump discontinuity over the interface.  Recall that we chose in \eqref{defphi alt} to normalize $\varphi$ so that $\varphi^{-1}(\{0\})$ is a connected set.  In terms of the hodograph transform, this becomes 
$$ (\mathscr{K}_{i+1})^{-1}(0, s_{i}) =( \mathscr{K}_{i})^{-1}(0, s_{i}), \qquad \textrm{for all } i = 1, \ldots, N-1,$$
which ensures that, at the very least, functions like $f$ above are continuous along the $s$-axis.   

Given this issue, it is necessary to examine the jump conditions on the internal interfaces very carefully.  Consider the situation on the $i$-th interior interface $\{ s = s_{i} \}$, for some $1 \leq i \leq N-1$.  Let 
\be\label{Xi}
X_{i+} := x_{i+1}(\cdot,s_{i}), \qquad X_{i-} := x_{i}(\cdot, s_{i}),
\ee
where $x_i(t,s)$ is the restriction of the Eulerian $x$-coordinate on $\mathpzc{T}_i$. Likewise define
$$\Phi_{i+}(x) := \varphi_{i+1}(x, \eta_{i}(x)), \qquad \Phi_{i-}(x) := \varphi_{i}(x, \eta_{i}(x)).$$
This simply says that $X_{i\pm}$ and $\Phi_{i\pm}$ are inverse to each other, i.e., 
\be 
X_{i\pm} = X_{i\mp}(\Phi_{i\mp}(X_{i\pm})), \qquad t = \Phi_{i\pm}(X_{i\pm}(t)).  
\label{reparameterization identities} \ee

The relation \eqref{reparameterization identities} allows us to translate from the parameterization of the interface by the coordinates in $\mathpzc{T}_{i+1}$, to that given by the coordinates in $\mathpzc{T}_{i}$:  define
\be \label{def mathfrakt} 
\tau_{i\pm}(t) := \Phi_{i\pm}(X_{i\mp}(t)), \ee
then 
\be\label{translation}
(\mathscr{K}_{i+1})^{-1}(t, s_{i}) =( \mathscr{K}_{i})^{-1}(\tau_{i-}(t), s_{i}), \qquad (\mathscr{K}_{i+1})^{-1}(\tau_{i+}(t), s_{i}) = (\mathscr{K}_{i})^{-1}(t, s_{i}).
\ee
We also mention that \eqref{defphi alt} implies 
\be X_{i+}(0) = X_{i-}(0) \qquad \textrm{and} \qquad \tau_{i-}(0) = 0. \label{alt initial cond} \ee

Suppose that $f$ is a function defined on $\overline{\mathpzc{T}_i} \cup \overline{\mathpzc{T}_{i+1}}$ and let $F = f \circ \mathscr{K}$ be its expression in physical variables.  Then $f_{i+1}(t,s_{i})$ corresponds to the value of $F_{i+1}$ at the Eulerian point $(x,\eta_i(x))$ where $\Phi_{i+}(x) = t$.  Chasing definitions, we see that this is the same point at which $\Phi_{i-}(x) = \tau_{i-}(t)$.  Therefore, the jump in $F$ over the $i$-th interface corresponds to the following difference in the conformal variables:
\be\label{newjump} \jump{f}_i := f_{i+1}(\cdot, s_{i}) - f_{i}(\tau_{i-}(\cdot), s_{i}).\ee
Hence the jump condition \eqref{jump E}  becomes  
\be \displaystyle \jump{ {c^2 \over  {H^2_t + H^2_s}}}_i + 2g \jump{\varrho (H - \mathring{H})}_i  = \jump{\varrho}_i c^2, \qquad  \textrm{on } \{s = s_{i}\}, \ i = 1, \ldots, N-1. \label{Bernoulli jump conformal} \ee

\subsection{Solution of the Cauchy problem}

Our procedure begins with the pressure data on the ocean bed, then reconstructs the flow in each layer moving from the lowest layer upward. At each step, this involves solving a non-standard elliptic problem:  $H$ is harmonic in each strip $\mathpzc{T}_i$, and we will have both the Dirichlet and Neumann data on the lower boundary.  This is essentially the Cauchy problem for Laplace's equation (cf. Section \ref{determinability section}), which is generally ill-posed.  In the present case, however, we are not concerned with existence theory --- we know the wave exists because we have measured it --- and the asymptotic decay properties \eqref{asymptotics} will allow us to make sense of the solutions as functions.  The ill-posedness issue reemerges if the pressure data is noisy, but we forestall that discussion until Section \ref{pressure noise section}.

The process of solving the Cauchy problem in each strip can be abstracted using nonlocal operators.  In fact, these operators are nothing but  Fourier multipliers whose symbols can be found via elementary separation of variables.  Explicitly, for each $i = 1, \ldots, N$, we define an operator $\mathcal{C}_i$ by
\be\label{DN operator}
\mathcal{C}_i (\phi_1, \phi_2) := (H|_{s = s_i}, H_s|_{s = s_i}),
\ee
where $H$ is the unique solution of the Cauchy problem
\be\label{DN problem}
\left\{\begin{array}{ll}
\Delta H = 0, \qquad & \textrm{in } \mathpzc{T}_i\\
H = \phi_1, \ \ H_s = \phi_2, & \textrm{on } \{s = s_{i-1}\}. 
\end{array}
\right.
\ee
That is, $\mathcal{C}_i$ takes the Cauchy data on the lower boundary of the $(i-1)$-st layer $\{ s = s_{i-1} \}$, solves the PDE, and then evaluates the Cauchy data for that solution on the upper boundary $\{ s = s_i \}$.  For now, we do not  specify precisely the domain and codomain of $\mathcal{C}_i$, but rather work formally.  

Taking the Fourier transform of \eqref{DN problem} in the $t$-direction, denoting by $\xi$ the Fourier variable, we see immediately that $\widehat{H} :=\mathcal{F}H$ must take the form 
\begin{equation}\label{solution to DN problem}
\widehat{H}(\xi, s) = \widehat{\phi}_1(\xi) \cosh\LB (s-s_{i-1})\xi \RB + {\widehat{\phi}_2(\xi) \over \xi} \sinh\LB (s-s_{i-1})\xi \RB.
\end{equation}
Hence
\begin{equation}\label{solution to DN}
\mathcal{F}[{\mathcal{C}_i (\phi_1, \phi_2)}](\xi) = \begin{pmatrix} \widehat{\phi}_1(\xi) \cosh\LB (s_i-s_{i-1})\xi \RB + {\dfrac{\widehat{\phi}_2(\xi)}{\xi}} \sinh\LB (s_i-s_{i-1})\xi \RB \\ \xi\widehat{\phi}_1(\xi) \sinh\LB (s_i-s_{i-1})\xi \RB + {\widehat{\phi}_2(\xi) } \cosh\LB (s_i-s_{i-1})\xi \RB \end{pmatrix}^T.
\end{equation}


\subsection{Reconstruction of the first internal interface}\label{subsec_firstlayer}

As a model computation, we derive a reconstruction formula for the first interface $\{y = \eta_1(x)\}$ given $P_{\textrm{bot}}(\cdot, s_0)$.  From \eqref{H_s on bottom} we recover the normal derivative of $H_1$ on the bed
\begin{equation}\label{p_s H bottom}
\p_sH_1(t, s_0) = {c \over \sqrt{\varrho_1} \sqrt{c^2 - 2 \mathfrak{p}(t)}}.
\end{equation}
Note that both sides above are bounded functions of $t$ and hence can be regarded as tempered distributions. Performing a  Fourier transform in $t$ yields 
$$\widehat{\p_sH_1}(\xi, s_0) = \mathcal{F}\LCB {c \over \sqrt{\varrho_1} \sqrt{c^2 - 2 \mathfrak{p}}} \RCB(\xi) $$
in the sense of distributions. Applying \eqref{solution to DN}, we find that
\begin{align}
\widehat{H_1}(\xi, s) & = { \sinh \LB (s-s_0)\xi \RB \over \xi} \mathcal{F}\LCB {c \over \sqrt{\varrho_1} \sqrt{c^2 - 2 \mathfrak{p}}} \RCB(\xi), \quad s\in [s_0, s_1], \label{H in 1st strip} \\
\widehat{\p_sH_1}(\xi, s_1) & = \cosh \LB (s_1 - s_0) \xi \RB \mathcal{F}\LCB {c \over \sqrt{\varrho_1} \sqrt{c^2 - 2 \mathfrak{p}}} \RCB(\xi). \label{H_s on s_1}
\end{align}

From the definitions of $x_i, H_i$, and $\mathscr{K}$, we know that the inverse of $\mathscr{K}|_{\overline{\mathpzc{T}_1}}$ is given by 
$$
x = x_1(t,s), \quad y = H_1(t,s) - d.
$$
This together with \eqref{Xi} and \eqref{Yi} yields
\begin{equation}\label{1st layer}
(x, \eta_1(x)) = (x_1(t,s_1), H_1(t, s_1) - d) = (X_{1-}(t), Y_1(t)).
\ee
Note that there is no reason to add a $\pm$ to the subscript for $Y_1$ because $s \mapsto y(\cdot, s)$, unlike $s \mapsto x(\cdot, s)$, is continuous.  

From \eqref{chain rule} we have the identity $\p_sH_1 = \partial_t x_1$. Therefore, \eqref{H in 1st strip} and \eqref{H_s on s_1} give 
\begin{align*}
\mathcal{F}\{Y_1\} (\xi) & = { \sinh \LB (s_1-s_0)\xi \RB \over \xi} \mathcal{F}\LCB {c \over \sqrt{\varrho_1} \sqrt{c^2 - 2 \mathfrak{p}}} \RCB(\xi) - 2\pi d \delta(\xi),\\
\mathcal{F}\{\dot{X}_{1-}\} (\xi) & = \cosh \LB (s_1 - s_0) \xi \RB \mathcal{F}\LCB {c \over \sqrt{\varrho_1} \sqrt{c^2 - 2 \mathfrak{p}}} \RCB(\xi),
\end{align*}
since $\mathcal{F}\{1\}(\xi) = 2\pi \delta(\xi)$. Using some elementary identities, we can simplify this further to obtain
\begin{align}
\mathcal{F}\LCB Y_1 - \LC {s_1 - s_0 \over \sqrt{\varrho_1}} - d \RC \RCB (\xi) & = { \sinh \LB (s_1-s_0)\xi \RB \over \xi} \mathcal{F}\LCB  {1\over \sqrt{\varrho_1}} \LC {c \over \sqrt{c^2 - 2 \mathfrak{p}}} - 1 \RC \RCB(\xi), \label{Fourier H_1}\\
\mathcal{F}\left\{\dot{X}_{1-} - {1\over \sqrt{\varrho_1}}\right\} (\xi) & = \cosh \LB (s_1 - s_0) \xi \RB \mathcal{F}\LCB {1\over \sqrt{\varrho_1}} \LC {c \over \sqrt{c^2 - 2 \mathfrak{p}}} - 1 \RC \RCB(\xi) \label{Fourier theta_1}.
\end{align}
Both \eqref{Fourier H_1} and \eqref{Fourier theta_1} hold in the sense of (tempered) distributions. Ideally, we would like them to hold in the strong sense, and to show this we use the idea of Constantin \cite{constantin2012pressure}. From \eqref{chain rule} and \eqref{H_s on bottom} we know that 
\begin{equation}\label{decay p}
[c - u(x, -d)]^2 = c^2 - 2\mathfrak{p}(t).
\end{equation}
On the other hand, we are working with solutions in the set $\mathscr{H}$, which means that $u(x,-d)$, $\nabla u(x, -d)$ decay exponentially fast as $|x| \to \infty$. So, by combining \eqref{decay p} and \eqref{chain rule ts}, we may conclude that $\mathfrak{p}$ and its derivative are square-integrable. In particular, $\mathfrak{p}(t) \to 0$ as $|t| \to \infty$. Moreover, writing
\begin{equation*}
{c \over \sqrt{c^2 - 2 \mathfrak{p}(t)}} - 1 =  {2\mathfrak{p} \over \sqrt{c^2 - 2 \mathfrak{p}(t)} \LC c + \sqrt{c^2 - 2 \mathfrak{p}(t)} \RC},
\end{equation*}
we see that that the left-hand side is likewise in $L^2$. 

Observe from \eqref{chain rule} and \eqref{Xi} that $\dot{X}_{1-} = \p_sH_1(\cdot, s_1)$, and therefore the terms on the left-hand side of \eqref{Fourier H_1} and \eqref{Fourier theta_1} are Fourier transforms of square integrable functions. This guarantees that \eqref{Fourier H_1} and \eqref{Fourier theta_1} are indeed equalities for functions. Putting together \eqref{1st layer}--\eqref{Fourier theta_1}, we obtain a parameterization of the first internal interface
\begin{align}
& X_{1-}(t) = {t\over \sqrt{\varrho_1}} + \int^{t}_{-\infty} \mathcal{F}^{-1} \LCB \cosh\LB (s_1 - s_0) \xi \RB \mathcal{F}\LB {1\over \sqrt{\varrho_1}} \LC {c \over \sqrt{c^2 - 2 \mathfrak{p}}} - 1 \RC \RB(\xi) \RCB (t') \ dt', \label{x 1st layer} \\
& Y_1(t) = \mathcal{F}^{-1}\LCB { \sinh \LB (s_1-s_0)\xi \RB \over \xi} \mathcal{F}\LB  {1\over \sqrt{\varrho_1}} \LC {c \over \sqrt{c^2 - 2 \mathfrak{p}}} - 1 \RC \RB(\xi) \RCB (t) + {s_1 - s_0 \over \sqrt{\varrho_1}} - d. \label{eta 1st layer}
\end{align}

\subsection{Reparameterization}\label{subsec repara}
Now suppose we have obtained the parametrization of the $i$-th interface $(X_{i-}(t), Y_i(t))$. In order to iterate the procedure given above, we will need to compute the function $\tau_{i-}$ that allows us to deal with the singularity in the coordinates on the interface.   It is simplest to do this working in the semi-Lagrangian variables.  Recall that the jump condition on $\{ s = s_{i} \}$ (which is also $\{ p = p_{i} \}$) is expressed as follows
\begin{equation*}
-{1+ (\p_q h_{i+1})^2 \over 2(\p_p h_{i+1})^2} + {1+ (\p_q h_{i})^2 \over 2(\p_p h_{i})^2} - g (\varrho_{i+1} h_{i+1} - \varrho_i h_i) + {Q_i \over 2} = 0 \qquad \textrm{on } \{ p = p_{i} \},
\end{equation*}
where 
\begin{equation*}
Q_i := 2\LB (E_{i+1} - E_i) + g(\varrho_{i+1} - \varrho_i) d \RB.
\end{equation*}
Using the fact that $h$ and $h_q$ are both continuous across the interface we can solve for $\partial_ph_{i+1}$
\be\label{jump h_p}
\partial_p h_{i+1} = \sqrt{1+ (\partial_q h_{i})^2} \LB Q_i - 2g (\varrho_{i+1} - \varrho_i) h_{i} + {1+ (\partial_q h_{i})^2 \over (\partial_p h_{i})^2} \RB^{-1/2} \qquad \textrm{on } \{ p = p_{i} \}.
%
\ee

Recall that we have the change of variable relations
\begin{equation*}
{1\over h_p} = -\psi_y = \varphi_x, \qquad {h_q\over h_p} = \psi_x = \varphi_y.
\end{equation*}
Thus, from \eqref{jump h_p} we can recover the full gradient $\nabla \varphi$ at each point on the $i$-th interface:
\begin{align*}
 \partial_x\varphi_{i+1}(x, \eta_{i}(x)) &= {1\over \partial_p h_{i+1}(x, p_{i})}  = \LB{1+ (\partial_q h_{i})^2}\RB^{-1/2} \left.\LB Q_i - 2g (\varrho_{i+1} - \varrho_i) h_{i} + {1+ (\partial_q h_{i})^2 \over (\partial_p h_{i})^2} \RB^{1/2}\right|_{(x, p_{i})},\\
 \partial_y\varphi_{i+1}(x, \eta_{i}(x)) &= {\partial_q h_{i+1}(x, p_{i})\over \partial_ph_{i+1}(x, p_{i})} = {\partial_q h_{i} \over \sqrt{1+ (\partial_q h_{i})^2}} \left.\LB Q_i - 2g (\varrho_{i+1} - \varrho_i) h_{i} + {1+ (\partial_q h_{i})^2 \over (\partial_p h_{i})^2} \RB^{1/2}\right|_{(x, p_{i})}.
\end{align*}


Let us now compute $\tau_{i-}$.  By differentiating \eqref{def mathfrakt} and using the identity 
$$\Phi_{i-}(X_{i+}) = \varphi_{i} (X_{i+}, \eta_{i}(X_{i+})),	$$ 
we obtain
\be\label{deriv t_i-}
\begin{split}
\dot{\tau}_{i-}(t) & = \dot{X}_{i+} (1, \eta'_i) \cdot \nabla \varphi_{i}(X_{i+}, \eta_i)\\
& = \dot{X}_{i+}\left. {1 + (\partial_q h_{i})^2 \over \partial_p h_{i}} \right|_{(X_{i+}, p_{i})} = {c\dot{X}_{i+} (t)\over (\partial_s H_{i}(X_{i+}(t), s_{i}))^2}. 
\end{split}
\ee
Here a dot denotes a derivative with respect to $t$.  Integrating, and rewriting in terms of the parameterization from the lower layer, this becomes 
\begin{align*} \tau_{i-}(t)  = \tau_{i-}(0) +  \int_{X_{i+}(0)}^{X_{i+}(t)} \frac{c}{(\partial_s H_i)(\tau, s_i)^2} \, d\tau =  \int_{X_{i+}(0)}^{X_{i+}(t)} \frac{c}{(\dot{X}_{i-}(t'))^2} \, dt', \end{align*}
where we have used the fact that $\tau_{i-}(0) = 0$ by \eqref{alt initial cond}.

Likewise, an ODE characterizing $X_{i+}$ can be derived by differentiating the second identity in \eqref{reparameterization identities}, giving
\begin{align*}
1 & = \Phi'_{i+}(X_{i+}) \dot{X}_{i+} = (1,\eta'_i) \cdot \nabla \varphi_{i+1}(X_{i+}, \eta_i) \dot{X}_{i+}\\
& = \dot{X}_{i+} \sqrt{1+ (\partial_q h_{i})^2} \left. \LB Q_i - 2g (\varrho_{i+1} - \varrho_i) h_{i} + {1+ (\partial_q h_{i})^2 \over (\partial_p h_{i})^2} \RB^{1/2}\right|_{(X_{i+}, p_{i})}.
\end{align*}
Again, initial conditions for this ODE are fixed by \eqref{alt initial cond}, and thus $X_{i+}$ can be defined as the unique solution of 
$$
\left\{\begin{array}{l}
\displaystyle \dot{X}_{i+}(t) = {\dot{X}_{i-} \over \sqrt{\dot{X}_{i-}^2+\dot{Y}_{i}^2}} \left. \LB Q_i - 2g (\varrho_{i+1} - \varrho_i)( Y_{i} + d ) + {c^2 \over \varrho_{i}( Y_{i}^2 + (\dot{X}_{i-})^2)} \RB^{-1/2}\right|_{(X_{i+}, s_{i})},\\\\
X_{i+}(0) = X_{i-}(0).
\end{array}
\right.
$$
In this way,  we can determine $\tau_{i-}$ and $X_{i+}$ from $(X_{i-}, Y_i, \dot{X}_{i-})$.

\subsection{Iteration}

In the previous subsections, we saw how to recover the first internal interface, and how to handle the jump conditions there.  The next step is to iterate this process so that we can reconstruct the entire wave. This is done in two stages:  first, we obtain formulas that hold in the distributional sense.  Second, we use the fact that we are working with $H \in \mathscr{H}$ to deduce that they hold in the strong sense as well.  

%
%

\subsubsection{Iteration in the sense of distributions}
For each $i \geq 1$, we define the trace operator $\Gamma_i$ with domain $\mathscr{H}$ by 
$$
 \Gamma_i: H \mapsto \left( H_i (\cdot, s_i), \, (\partial_s H_i)(\cdot, s_i), \, (\partial_t H_i)(\cdot, s_i) \right) \in \mathscr{H}_i, $$
 where
 $$ \Gamma_i (\mathscr{H}) =: \mathscr{H}_i = \mathscr{H}^{(1)}_i \times \mathscr{H}^{(2)}_i \times \mathscr{H}^{(3)}_i.$$
Simply put, $\Gamma_i$ evaluates the traces of the solution $H$ and its first derivatives in the region $\mathpzc{T}_i$ on the interface $\{ s = s_i \}$. The data we are given to begin the reconstruction process is precisely 
\begin{equation*}\label{H_0 info}
\mathscr{H}_0 := \LC 0, {c \over \sqrt{\varrho_1} \sqrt{c^2 - 2 \mathfrak{p}(t)}}, 0 \RC.
\end{equation*}

The iteration procedure is as follows. In the $(i+1)$-st fluid domain $\mathpzc{T}_{i+1}$, we are solving a Laplace equation with Dirichlet-Neumann data on the bottom boundary $\{s = s_{i}\}$ obtained from previous step: Take the data $( H_{i} (\cdot, s_{i}), \, (\partial_s H_{i})(\cdot, s_{i}), \, (\partial_t H_{i})(\cdot, s_{i}))$ from the solution $H_{i}$ in $\mathpzc{T}_{i}$. In order to recast the hodograph transformation in $\mathpzc{T}_{i+1}$ we need to access to $\tau_{i-}$ computed in the previous subsection. From \eqref{translation}, the continuity of $H$ and $\p_t H$ across the interface $\{ s = s_{i} \}$ can be expressed as
$$
H_{i+1}(t, s_{i}) = H_{i}(\tau_{i-}(t), s_{i}), \quad \p_t H_{i+1}(t, s_{i}) = \p_t H_{i}(\tau_{i-}(t), s_{i}).
$$
This allows us to compute $\p_s H_{i+1}(\cdot, s_{i})$ from \eqref{Bernoulli jump conformal}:
\begin{equation}\label{jump H_s}
\begin{split}
\p_s H_{i+1}(\cdot,s_{i})^2 & = \LB {1\over \p_t H_{i}(\tau_{i-}(\cdot), s_{i})^2 + \p_s H_{i}(\tau_{i-}(\cdot), s_{i})^2} \right. \\
& \qquad \left. + {c^2 - 2g\LB H_{i}(\tau_{i-}(\cdot), s_{i}) -\mathring{H}(s_{i})\RB \over c^2} \LC \varrho_{i+1} - {\varrho}_i \RC \RB^{-1} - \p_t H_{i}(\cdot, s_{i})^2.
\end{split}
\end{equation}
We have therefore found a complete set of Cauchy data $(H_{i+1}, \p_s H_{i+1}, \p_t{H_{i+1}})$ for the Laplace equation \eqref{DN problem} in the next layer $\mathpzc{T}_{i+1}$.  Solving this, gives $( H_{i+1} (\cdot, s_{i+1}), \, (\partial_s H_{i+1})(\cdot, s_{i+1}), \, (\partial_t H_{i+1})(\cdot, s_{i+1}))$.  In total, this procedure can be written abstractly as a mapping 
$$ \mathcal{I}_{i+1}  :=  \mathscr{H}_{i} \to \mathscr{H}_{i+1}, \quad i = 1, \ldots, N-1.$$
The nonlocal and nonlinear expression for $\mathcal{I}_{i+1}$ is given by:
\be \label{formula for I_i} \begin{split}
\mathcal{I}_{i+1}^{(1)}(u,v,w) &= \mathcal{F}^{-1} \LCB\cosh\LB (s_{i+1} - s_{i})\xi \RB \widehat{u \circ \tau_{i-}}(\xi) + {\sinh\LB (s_{i+1} - s_{i})\xi \RB \over \xi} \widehat{f_{i+1}}(u,v,w)(\xi)\RCB  \\
\mathcal{I}_{i+1}^{(2)}(u,v,w) &= \mathcal{F}^{-1} \LCB \xi \sinh\LB (s_{i+1} - s_{i})\xi \RB \widehat{u\circ \tau_{i-}}(\xi) + \cosh\LB (s_{i+1} - s_{i})\xi \RB\widehat{f_{i+1}}(u,v,w)(\xi)\RCB  \\
\mathcal{I}_{i+1}^{(3)}(u,v,w) &= \partial_t (u\circ \tau_{i-}),
\end{split} \ee
where 
\be \label{def f_i} 
f_{i+1}(u,v,w)(\cdot) := \left[ \dfrac{1}{ \dfrac{1}{ v(\tau_{i-}(\cdot))^2 + w(\tau_{i-}(\cdot))^2} - \dfrac{c^2 + 2g\LB u(\tau_{i-}(\cdot))-\mathring{H}(s_{i})\RB}{c^2} (\varrho_{i+1} - \varrho_{i}) } - {w^2}  \right]^{1/2}, \ee
and the Fourier transforms are taken in the sense of distributions. 

\subsubsection{Iteration in the sense of functions}
The above scheme can be improved so that the equalities hold in the sense of functions.  To do this,  we utilize the exponential asymptotics \eqref{asymptotics} to conclude that
\begin{equation}\label{asymptotics i+1}
\begin{array}{l}
H_{i}(\tau_{i-}(t), s_{i}) = H_{i+1}(t, s_{i}) \to \mathring{H}(s_{i}), \quad H_{i+1}(t, s_{i+1}) \to \mathring{H}(s_{i+1}), \\\\
\displaystyle  \p_s H_{i+1}(t, s) \to {1\over \sqrt{\varrho_{i+1}}}, \quad \text{ for } s\in [s_{i}, s_{i+1}], 
\end{array}\quad \textrm{ as }\ |t| \to \infty,
\end{equation}
with the convergence being exponentially fast.  Thus, the differences between the functions and their corresponding asymptotics are square integrable. On the other hand, plugging \eqref{asymptotics i+1} into the first equation of \eqref{formula for I_i}, we find that
\begin{equation*}
\begin{split}
\mathcal{F} \LCB H_{i+1}(\cdot, s_{i+1}) - \mathring{H}(s_{i}) - { s_{i+1} - s_{i} \over \sqrt{\varrho_{i+1}}} \RCB(\xi)
& = \cosh\LB (s_{i+1} - s_{i})\xi \RB \mathcal{F} \LCB H_{i}(\tau_{i-}(\cdot), s_{i}) - \mathring{H}(s_{i}) \RCB(\xi) \\
& \quad + { \sinh\LB (s_{i+1} - s_{i})\xi \RB \over \xi } \mathcal{F} \LCB \p_s H_{i+1} (\cdot, s_{i}) - {1\over \sqrt{\varrho_{i+1}}} \RCB(\xi),
\end{split}
\end{equation*}
where $\p_s H_{i+1} (\cdot, s_{i})$ is determined from \eqref{jump H_s}. Further simplifying the left-hand side  above by using the definition of $\mathring{H}$ in \eqref{limiting H values}, we see that
\begin{equation}\label{recovery H_i+1}
\begin{split}
\mathcal{F} \LCB H_{i+1}(t, s_{i+1}) - \mathring{H}(s_{i+1})  \RCB(\xi) = & \  \cosh\LB (s_{i+1} - s_{i})\xi \RB \mathcal{F} \LCB H_{i}(\tau_{i-}(\cdot), s_{i}) - \mathring{H}(s_{i}) \RCB(\xi) \\
& + { \sinh\LB (s_{i+1} - s_{i})\xi \RB \over \xi } \mathcal{F} \LCB \p_s H_{i+1} (\cdot, s_{i}) - {1\over \sqrt{\varrho_{i+1}}} \RCB(\xi),
\end{split}
\end{equation}
 in the strong sense.

Now, to transform back to the original variables, consider $\mathscr{K}_{i+1}^{-1}$:
$$
x = x_{i+1}(t, s), \quad y = H_{i+1}(t,s) - d.
$$
From the relation $\p_t x_{i+1} = \p_s H_{i+1}$ and \eqref{solution to DN}, we know that
\begin{equation*}
\begin{split}
\mathcal{F}\LCB \p_t x_{i+1} (\cdot, s_{i+1}) \RCB(\xi) = &\ \xi \sinh\LB (s_{i+1} - s_{i})\xi \RB \mathcal{F} \LCB H_{i}(\tau_{i-}(\cdot), s_{i}) \RCB(\xi) \\
&+ \cosh\LB (s_{i+1} - s_{i})\xi \RB \mathcal{F} \LCB \p_s H_{i+1}(\cdot, s_{i}) \RCB(\xi).
\end{split}
\end{equation*}
Similarly, using the asymptotics \eqref{asymptotics i+1}, 
$$
\p_t x_{i+1}(t, s_{i+1}) = \p_s H_{i+1} (t, s_{i+1}) \to {1\over \sqrt{\varrho_{i+1}}} \quad \textrm{ exponentially as } \ |t| \to \infty,
$$
and the identity
$
{\xi\sinh (a \xi) }\mathcal{F}\{1\}(\xi) = {\xi \sinh (a \xi)} 2\pi \delta(\xi) = 0,\ 
$
we ultimately conclude that 
\begin{equation}\label{recovery theta_i+1}
\begin{split}
\mathcal{F} \LCB \p_t x_{i+1}(\cdot, s_{i+1}) - {1\over \sqrt{\varrho_{i+1}}}  \RCB(\xi) = & \  \xi \sinh\LB (s_{i+1} - s_{i})\xi \RB \mathcal{F} \LCB H_{i}(\tau_{i-}(\cdot), s_{i}) - \mathring{H}(s_{i}) \RCB(\xi) \\
& +\cosh\LB (s_{i+1} - s_{i})\xi \RB \mathcal{F} \LCB \p_s H_{i+1}(\cdot, s_{i}) - {1\over \sqrt{\varrho_{i+1}}} \RCB(\xi).
\end{split}
\end{equation}
Note that $X_{(i+1)-}(t) = x_{i+1}(t, s_{i+1})$ and $Y_{i+1}(t) = H_{i+1}(t, s_{i+1})$. Hence the $(i+1)$-st interface can be parametrized by
\begin{equation}\label{x i+1 layer}
\begin{split}
X_{(i+1)-}(t) = {t\over \sqrt{\varrho_{i+1}}} + \int^t_{-\infty} \mathcal{F}^{-1} & \LCB \xi \sinh\LB (s_{i+1} - s_{i})\xi \RB \mathcal{F} \LB H_{i}(\tau_{i-}(t), s_{i}) - \mathring{H}(s_{i+1}) \RB(\xi) \right. \\
& \ \left. +\cosh\LB (s_{i+1} - s_{i})\xi \RB \mathcal{F} \LB \p_s H_{i+1}(t, s_{i}) - {1\over \sqrt{\varrho_{i+1}}} \RB(\xi) \RCB(t')\ dt'
\end{split}
\end{equation}
\begin{equation}\label{eta i+1 layer}
\begin{split}
Y_{i+1}(t) = \mathcal{F}^{-1} & \LCB \cosh\LB (s_{i+1} - s_{i})\xi \RB \mathcal{F} \LB H_{i}(\tau_{i-}(t), s_{i}) - \mathring{H}(s_{i+1}) \RB(\xi) \right. \\
& \ \left. + { \sinh\LB (s_{i+1} - s_{i})\xi \RB \over \xi } \mathcal{F} \LB \p_s H_{i+1} (t, s_{i}) - {1\over \sqrt{\varrho_{i+1}}} \RB(\xi) \RCB(t) + \mathring{H}(s_{i+1}) - d.
\end{split}
\end{equation}
An argument like that given at the end of Section \ref{subsec_firstlayer} shows that these are equalities in the sense of functions.

In summary, to iterate in the regular function setting, we replace the trace operator $\Gamma_{i}$ in the previous subsection by another one which evaluates the difference between the trace and its asymptotics
\begin{equation*}
\begin{split}
\tilde{\Gamma}_{i}(\mathscr{H}) & = \left( H_{i} (\cdot, s_{i}) - \mathring{H}(s_{i}), \, (\partial_s H_{i})(\cdot, s_{i}) - {1\over\sqrt{\varrho_{i}}}, \, (\partial_t H_{i})(\cdot, s_{i}) \right) =: \tilde{\mathscr{H}}_{i}, \ i = 1,\ldots, N.
\end{split}
\end{equation*} 
Likewise, we normalized the data on the bed so that it has a Fourier transform defined in the strong sense:
\begin{equation*}
\tilde{\mathscr{H}}_0 := \LC 0, {c \over \sqrt{\varrho_1} \sqrt{c^2 - 2 \mathfrak{p}(t)}} - {1\over\sqrt{\varrho_1}}, 0 \RC.
\end{equation*}
A corresponding reconstruction map for the height function $\tilde{\mathcal{I}}_{i+1} : \tilde{\mathscr{H}}_{i} \to \tilde{\mathscr{H}}_{i+1}, \ i= 1, \ldots, N-1$ can thus be constructed as follows.
\begin{equation}\label{iteration function}
\begin{split}
& \tilde{\mathcal{I}}^{(1)}_{i+1} = \mathcal{F}^{-1}\LCB \textrm{the right-hand side of \eqref{recovery H_i+1}} \RCB, \ \textrm{where $\p_s H_{i+1}(\cdot,s_i)$ is given in \eqref{jump H_s}},  \\
& \tilde{\mathcal{I}}^{(2)}_{i+1} = \p_t x_{i+1}(t, s_{i+1}) - {1\over \sqrt{\varrho_{i+1}}} = \mathcal{F}^{-1}\LCB \textrm{the right-hand side of \eqref{recovery theta_i+1}} \RCB,  \\
& \tilde{\mathcal{I}}^{(3)}_{i+1} = \p_t \tilde{\mathcal{I}}^{(1)}_{i+1},
\end{split}
\end{equation}
which  the corresponding iteration scheme for the parametrized interface reconstruction is 
\begin{equation}\label{iteration surface}
\left\{\begin{array}{l}
\displaystyle X_{(i+1)-}(t) =  {t\over \sqrt{\varrho_{i+1}}} + \int^t_{-\infty} \tilde{\mathcal{I}}^{(2)}_{i+1}(t')\ dt',\\\\
Y_{i+1}(t) = \tilde{\mathcal{I}}^{(1)}_{i+1}(t) + \mathring{H}(s_{i+1}) - d,
\end{array}\right. \quad i = 1, \ldots, N-1.
\end{equation}
Composition then gives the $(i+1)$-st layer directly from the data on the ocean floor (which can be determined from the bottom pressure readings):
$$
\mathcal{J}_{i+1} := \tilde{\mathcal{I}}_{i+1}\circ \tilde{\mathcal{I}}_{i} \circ \cdots \circ \tilde{\mathcal{I}}_{1} : \tilde{\mathscr{H}}_0 \to \tilde{\mathscr{H}}_{i+1}.
$$
and the reconstruction of the air-sea interface can thus be obtained at $i = N-1$. 
\section{Reconstruction for continuous stratification}\label{continuous section} 

\subsection{Overview}
We now consider the task of reconstructing the free surface of a \emph{continuously} stratified traveling wave from pressure data.  The method of the previous section was to adopt conformal coordinates where the governing equation is simply Laplace in a strip.  As continuously stratified waves are rotational, the hodograph transform is no longer available, and the dynamics in the interior are too complicated to permit a similar approach.  With that in mind, we will instead proceed by \emph{approximating} the continuous wave by a layer-wise constant density and irrotational one.  

Before giving the details of this procedure, let us recall some terminology and important results from the literature.  The existence of small-amplitude supercritical solitary waves in both the continuous stratification and layer-wise smooth stratification regimes can be found in Turner \cite{turner1981internal,turner1984variational}, where they are built via a limit of periodic solutions with periods increasing to $\infty$.  Supercritical here refers to the fact that the wave speed is sufficiently large so that the (generalized) Froude number for the flow lies below $1$. 
Turner (and all of the contemporaneous existence theory) dealt exclusively with steady waves for which the Bernoulli function $\beta$ has the form \eqref{solitary beta}.  As we have discussed previously, this is the only possibility if one considers solitary waves that have uniform velocity upstream and downstream.  Because he aimed to construct such waves as limits of periodic solutions, Turner naturally looked for periodic waves that also have this type of Bernoulli function.   It will be important for our later analysis to note that both the periodic and solitary waves in Turner's theory are waves of elevation, that is, their streamlines lie above those for the corresponding laminar flow.  More precisely, if we define $\mathring{h}$ as in \eqref{limiting h}, then a wave of elevation is one whose height function $h$ satisfies
\be h - \mathring{h} > 0 \qquad \textrm{in } \overline{\mathcal{R}} \setminus \{ p = p_0 \}. \label{h wave of elevation} \ee





Our reconstruction method is based upon the fact that small-amplitude periodic waves with $\beta$ as in \eqref{solitary beta} depend continuously on the streamline density function.  This was proved recently by the authors in \cite{ChenWalsh2014continuity}.  Briefly, the result is the following. 
\begin{thm}[Chen--Walsh \cite{ChenWalsh2014continuity}] \label{continuity theorem} Fix a H\"older exponent $\alpha \in (0,1)$, a period $L$ sufficiently large, a depth $d>0$, and a pseudo volumetric mass flux $p_0<0$. Let ${\rho}_* \in C^{1,\alpha}([p_0, 0])$ with $\rho_*(0) = 1$ be a stable streamline density function and suppose that $h_*$ is a corresponding solution of \eqref{h eqn} for a wave speed $c_*$ that is supercritical.  If $h_*$ is a wave of elevation and sufficiently small-amplitude, then the following statements hold.
\begin{enumerate}
\item[(A1)] There is a neighborhood ${\mathcal{U}}$ of ${\rho}_*$ in $L^\infty([p_0,0])$ such that for any ${\rho} \in {\mathcal{U}}$ with $\rho(0) = 1$ that is non-increasing and piecewise smooth, there is a solution $(h, {\rho}, c)$ to \eqref{h eqn} such that $h$ is a strict wave of elevation and $c$ is supercritical.  
\item[(A2)] We have the estimates
\begin{equation}\label{conv_h}
\|h - h_*\|_{W_{\textrm{per}}^{1, r}(\overline{\mathcal{R}})} \leq C_1 \|{\rho} - {\rho}_*\|_{L^\infty}
\end{equation}
and, 
\begin{equation}\label{conv_c}
c - c_* = {1\over d} \int^0_{p_0} \LB {1\over \sqrt{\rho}} - {1\over \sqrt{\rho_*}} \RB\ ds = \mathcal{O}(\|\rho - \rho_*\|_{L^\infty}).
\end{equation}
for some constant $C_1$ depending on $\rho_*$, $h_*$, and $L$.  Here $r := 2/(1-\alpha)$ is chosen so that $W^{1,r}$ embeds continuously in $C^{0,\alpha}$.  
\item[(A4)] Moreover, the pressure trace on the bed converges in the following sense.  Let a connected set $I \subset\subset [p_0, 0]\setminus \{ p_1, \ldots, p_{N-1}\}$ be given with $p_0 \in I$, and assume that $\rho \in C^{1,\alpha}(\overline{I})$. Denote by  $P_{\textrm{bot}}$  the trace of the pressure on the ocean bed for the traveling wave with density $\rho$, and let $P_{{\textrm{bot}}*}$ be the trace of $P_*$ on the bed.    Then
\begin{equation}\label{conv_P} 
\| P_{\textrm{bot}} - P_{\textrm{bot}*} \|_{C_{\textrm{per}}^{0, \alpha}(\mathbb{R})} \leq C_2 \left(   \| \rho - \rho_*\|_{L^\infty([p_0,0])} +  \| \rho - \rho_*\|_{C^{1,\alpha}(\overline{I})} \right),
\end{equation}
where $C_2 > 0$ depends on the length of $I$, $\rho_*$, $h_*$, and $L$.  
\end{enumerate}
\end{thm}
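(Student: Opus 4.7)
The theorem is a quantitative continuous dependence statement for small-amplitude supercritical periodic solutions of \eqref{h eqn}, and I would prove it by a quantitative implicit function theorem in suitable function spaces, followed by a localized Schauder boost for the pressure trace. The chief obstruction is that \eqref{h eqn} contains $\rho_p$, which is only a distribution when $\rho$ is merely $L^\infty$ with jumps, so the solution operator for $h$ in terms of $\rho$ must first be reformulated so that only $\rho$ itself (not $\rho_p$) appears.

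My first step is to recast \eqref{h eqn} in weak form: substitute the identity $\beta(p) = \rho'(p)[\tfrac12 c^2 + g(\mathring h - d)]$ from \eqref{solitary beta} into Yih's equation, multiply by a test function, and integrate by parts in $p$ to shift all $p$-derivatives off $\rho$. The resulting weak formulation makes sense for $\rho \in L^\infty$ and automatically encodes the jump conditions \eqref{h jump E} at any discontinuities. I can then define $\mathcal{F}(v,\rho) = 0$ with $h = h_* + v$ on $W^{1,r}_{\textrm{per}}(\mathcal{R}) \times L^\infty([p_0,0])$; the map is $C^1$ in $v$ and Lipschitz in $\rho$.

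The second step is to invert the linearization $D_v\mathcal{F}(0,\rho_*)$, a uniformly elliptic operator with a Dirichlet condition at $p = p_0$ and an oblique condition at $p = 0$ from the linearized Bernoulli law. Supercriticality of $c_*$ combined with the small-amplitude assumption on $h_*$ ensures that this linearization is an isomorphism: this is verified by Fourier/Sturm--Liouville analysis at the laminar flow and persists to $h_*$ by continuity in operator norm. The implicit function theorem then delivers $h = h(\rho)$ with the Lipschitz estimate \eqref{conv_h}, and the embedding $W^{1,r} \hookrightarrow C^{0,\alpha}$ lets me inherit the wave-of-elevation and no-stagnation properties for $\rho$ near $\rho_*$. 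The formula \eqref{conv_c} for $c - c_*$ follows directly from \eqref{c determined by rho}, which is a constraint on the upstream state alone, and the $\mathcal{O}(\|\rho - \rho_*\|_{L^\infty})$ bound is the mean value theorem for $s \mapsto 1/\sqrt{s}$ on a bounded positive interval.

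The most delicate step is (A4), because it requires $C^{0,\alpha}$ control of $P_{\textrm{bot}}$ while (A2) only provides $W^{1,r}$ control on $h$, which does not bound $h_p$. I would exploit \eqref{pressure transfer}, which expresses $P_{\textrm{bot}}$ as a smooth function of $h_p(\cdot, p_0)$ plus terms depending only on $\rho$, $c$, and the surface velocity. On the set $I$ where $\rho \in C^{1,\alpha}(\overline{I})$, the coefficients of the height equation are $C^{0,\alpha}$ in the strip $\{p \in I\}$, so interior Schauder estimates up to the flat bed boost the $W^{1,r}$ bound from (A2) to a $C^{2,\alpha}$ bound on that strip, with constants depending on both $\|\rho - \rho_*\|_{L^\infty}$ and $\|\rho - \rho_*\|_{C^{1,\alpha}(\overline{I})}$. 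Tracing to $p = p_0$ and substituting into \eqref{pressure transfer} yields \eqref{conv_P}.
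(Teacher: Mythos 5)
This theorem is quoted from the authors' companion paper \cite{ChenWalsh2014continuity} and is not proved in the present manuscript, so there is no internal proof to compare against; judged on its own terms, your outline reconstructs the strategy of the cited work quite faithfully: a divergence-form/weak reformulation so that only $\rho$ (and not $\rho_p$) appears and the jump conditions \eqref{h jump E} are encoded automatically, an implicit-function/contraction argument about the small-amplitude wave whose linearization is invertible by supercriticality, the identity \eqref{c determined by rho} for \eqref{conv_c}, and a Schauder boost on the sub-strip where $\rho\in C^{1,\alpha}(\overline{I})$ fed into \eqref{pressure transfer} for \eqref{conv_P}. Two points where your sketch is thinner than a complete argument: in (A1), strictness of the wave-of-elevation property for the perturbed wave does not follow from the $W^{1,r}\hookrightarrow C^{0,\alpha}$ embedding alone, since both $h-\mathring h$ and $h_*-\mathring h_*$ vanish on the bed and their difference is only small in $C^{0,\alpha}$; one needs a Hopf-lemma/maximum-principle argument giving a uniform lower bound on $\partial_p(h_*-\mathring h_*)$ at $\{p=p_0\}$. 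And since $\rho\mapsto\mathcal F(v,\rho)$ is only Lipschitz in $L^\infty$, the implicit function theorem must be invoked in its contraction-mapping form with a merely continuous parameter, which still yields the Lipschitz estimate \eqref{conv_h} but should be stated as such.
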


The basic structure of the approximation scheme is intuitive, but several nontrivial technical complications arise.  Take a solitary wave with continuous stratification that is small-ampltiude, supercritical, and a wave of elevation.  Assume, as in Turner, that this wave can be realized as the limit of periodic solutions of increasing period.  Each of these periodic waves can in turn be approximated by a layer-wise constant density irrotational flow using Theorem \ref{continuity theorem}.  Finally, the multi-layered flows can be reconstructed from the pressure data as detailed in Section \ref{streamline recovery section}.  

\subsection{Layer-wise approximation to continuous stratification}\label{subsec layer approx}
In order to rigorously implement the approach suggested above, however, we need to more closely examine the convergence of periodic waves to solitary waves.  Unfortunately, this also requires working in a new coordinate system that can be seen as a normalized version of the semi-Lagrangian variables.  Specifically, we wish to non-dimensionalize and rescale the streamline coordinate $p$ by mapping
$$ (q, p) \mapsto  (\xi, \zeta),\quad 
\text{where } \quad
\xi := \frac{q}{d}, \quad \zeta := \frac{1}{cd} \int_0^p \frac{1}{\sqrt{\rho(s)}} \, ds.$$
For each $L > 0$, the rectangle $\mathcal{R}^L := \{ (q,p) \in (-L, L) \times (p_0, 0) \}$ is then taken to
$$ \mathcal{S}^L := \{ (\xi, \zeta) : \xi \in (-\frac{L}{d}, \frac{L}{d}), ~ \zeta \in (-1, 0) \},$$
with each layer $\mathcal{R}_i^L$ being sent to
$$ \mathcal{S}_i^L := \{ (\xi, \zeta) : \xi \in (-\frac{L}{d}, \frac{L}{d}), ~ \zeta \in (\zeta_{i-1}, \zeta_i) \}, \qquad \textrm{where } \zeta_i := \frac{1}{cd} \int_0^{p_i} \frac{1}{\sqrt{\rho(s)}} \, ds.$$
We will continue to denote the full strip by $\mathcal{R}$.   Likewise, $\mathcal{R}_i$, $\mathcal{S}$, and $\mathcal{S}_i$ are each unbounded.  

The height equation \eqref{h eqn} is then recast in terms of a new unknown
\be w = w(\xi, \zeta) := \dfrac{y(\xi, p(\zeta))}{d} - \zeta, \label{def w} \ee
which will satisfy another quasilinear elliptic problem with the rescaled streamline density function
\be \mathring{\rho}(\zeta) := \rho(p(\zeta)) \label{def scaled rho} \ee
appearing as a coefficient.  Here, $w$ measures the deviation of the streamlines from their height at infinity.  One advantage of using this quantity is that it decays upstream and downstream and thus lies in a Sobolev space set on $\mathcal{S}$.  Note that $w$ is a wave of elevation \eqref{h wave of elevation} provided that 
\be w > 0 \qquad \textrm{in } \overline{\mathcal{S}} \setminus \{ \zeta = -1 \}.\label{w wave of elevation} \ee  

Following Turner, to $w$ as in \eqref{def w}, we associate an energy over the set $\mathcal{K} \subset \mathcal{S}$ 
\begin{equation}\label{energy} 
F(w, \mathcal{K}) := \int_{\mathcal{K}} \mathring{\rho} {|\nabla w|^2 \over 1 + w_\zeta}\ d\xi \, d\zeta.
\end{equation}
Note that the absence of stagnation \eqref{nostagnation} requirement translates to the statement that $1+w_\zeta > 0$ in these variables, hence the above integral is well-defined.   Physically, $F$ describes the amount of kinetic energy imparted by the deviation from the far field state.  

Let us now more precisely formulate the class of solitary waves for which we can successfully prosecute the scheme outlined above.

\begin{assumption} \label{recoverable solitary wave}   Consider a solitary stratified water wave represented by $(w_s, \mathring{\rho}, c_s)$.  We require that (i) the wave has finite energy
$$ F(w_s, \mathcal{S}) = R^2 < \infty,$$
that is sufficiently small; (ii) $\| w_s \|_{W^{1,\infty}(\mathcal{S})} < \delta$ for some $\delta > 0$ sufficiently small;  (iii) $w_s$ is a wave of elevation \eqref{w wave of elevation}; and (iv) $w_s$ can be realized as a limit of periodic stratified waves with the same energy.  This last statement means that there exists a sequence $\{ (w_m, \mathring{\rho}, c_m) \}$ of continuously stratified periodic steady waves with $w_m, \partial_\xi w_m$ uniformly exponentially localized, 
$$ \| \nabla w_m \|_{W^{1,\infty}(\mathcal{S}^m)} < \delta, \qquad F(w_m, \mathcal{S}^m) = R^2,$$
such that there is a subsequence converging to $w_s$ in $C^{0,\alpha}(\mathcal{S}^k)$ for all $k \in \mathbb{N}$ and $c_m \to c$.  
\end{assumption}
From Turner \cite{turner1984variational}, we know that waves satisfying Assumption \ref{recoverable solitary wave} exist.  Indeed, much of the solitary wave existence theory for stratified waves is built on exactly this type of limiting construction.  Nonetheless, one cannot say that {every} solitary stratified wave falls within this class, hence we must make it a hypothesis.   

Now let $(w_s, \mathring{\rho}, c_s)$ be given as above with $\{ (w_m, \mathring{\rho}, c_m)\}$  the corresponding sequence of continuously stratified periodic waves.  Then we may apply Theorem \ref{continuity theorem} to approximate each $(w_m, \mathring{\rho}, c_m)$ by a layer-wise constant density irrotational wave $(w^N_m, \mathring{\rho}^N, c^N_m)$.  From \eqref{conv_h}--\eqref{conv_c} we have the estimates  
$$
\|w^N_m - w_m\|_{C^{0,\alpha}_{\textrm{per}}(\overline{\mathcal{R}^m})} \leq C\|w^N_m - w_m\|_{W^{1,r}_{\textrm{per}}({\mathcal{R}^m})} \leq C_1 \|\mathring{\rho}^N - \mathring{\rho}\|_{L^\infty},$$
and
$$ | c^N_m - c_m |  \leq C \|\mathring{\rho}^N - \mathring{\rho}\|_{L^\infty},$$
where $r = 2/(1-\alpha)$ and $C_1 = C_1(w_m, \mathring{\rho}, m)$. Note that using \eqref{conv_c}, we can adjust $\mathring{\rho}^N$ to ensure that $c^N_m = c_m$, and thus the multi-layer  solutions are $(w^N_m, \mathring{\rho}^N, c_m)$.

The difference in the energies can then be estimated:
\begin{equation*}
\begin{split}
\LV F(w_m^N, \mathcal{S}^m) - F(w_m, \mathcal{S}^m) \RV & \leq  \int_{\mathcal{S}^m} \LV \mathring{\rho}^N {|\nabla w^N_m|^2 \over 1 + \partial_\zeta w^N_m} - \mathring{\rho}_s {|\nabla w_m|^2 \over 1 + \partial_\zeta w_m}\RV d\xi \, d\zeta\\
&\leq C_2 \|\mathring{\rho}^N - \mathring{\rho}_s\|_{L^\infty} + C_3 \|\mathring{\rho}_s\|_{L^\infty} \|w^N_m - w_m\|_{W^{1,2}(\mathcal{S}^m)}\\
&\leq \LC C_2 + C_1C_3 \|\mathring{\rho}_s\|_{L^\infty} \RC \|\mathring{\rho}^N - \mathring{\rho}_s\|_{L^\infty},
\end{split}
\end{equation*} 
where $C_2 = C_2(\|\nabla w^N_m \|_{L^\infty(\mathcal{S}^m)}, \|\nabla w^N_m\|_{L^2(\mathcal{S}^m)})$ and $C_3 = C_3(\|\nabla w^N_m \|_{L^\infty(\mathcal{S}^m)}, \|\nabla w_m \|_{L^\infty(\mathcal{S}^m)})$. From \cite[Lemma 3.7, Lemma 3.10]{ChenWalsh2014continuity} we know that,
$$
\|\nabla w^N_m \|_{L^\infty(\mathcal{S}^m)}, \|\nabla w^N_m\|_{L^2(\mathcal{S}^m)} \leq C(\|\nabla w_m\|_{L^2(\mathcal{S}^m)}) \leq C(R, \delta)
$$
uniformly in $N$.  The exponential localization implies moreover that these constants are uniform in $m$ as well. Therefore
\begin{equation}\label{conv_F}
\LV F(w_m^N, \mathcal{S}^m) - R^2 \RV \leq C \|\mathring{\rho}^N - \mathring{\rho}\|_{L^\infty}.
\end{equation}

Now for any fixed $N$, following the diagonalization technique as in Turner \cite{turner1981internal}, modulo a subsequence, we have
$$
w^N_m \to w^N_s, \ \textrm{ as } m\to \infty, \qquad \textrm{in } C^{0, \alpha}(\mathcal{R}^k) \textrm{ for all } k\in \mathbb{N},
$$
where $(w^N_s,\mathring{\rho}, c^N_s)$ is a layer-wise smooth solitary wave carrying an energy 
$$
F(w^N_s, \mathcal{S}) = \lim_{m\to\infty} F(w^N_m, \mathcal{S}^m).
$$
The convergence of the right-hand side is a consequence of the uniform (in $m$) exponential decay of $w^N_m$ and $\nabla w^N_m$. From \eqref{conv_F} we know that 
$$
\LV F(w_s^N, \mathcal{S}) - F(w_s, \mathcal{S}) \RV \leq C \|\mathring{\rho}^N - \mathring{\rho}\|_{L^\infty}.
$$
Passing to a further subsequence ensures the convergence of $c^N_m = c_m$ to some $c^N_s$. So, for any $\epsilon > 0$, taking $m$ large enough gives
\begin{equation*}
\begin{split}
\|w^N_s - w_s\|_{C^{0,\alpha}(\overline{\mathcal{R}_k})} & \leq \|w^N_m - w^N_s\|_{C^{0,\alpha}(\overline{\mathcal{R}^k})} + \|w^N_m - w_m\|_{C^{0,\alpha}(\overline{\mathcal{R}^k})} + \|w_m - w_s\|_{C^{0,\alpha}(\overline{\mathcal{R}_k})}\\
& < 2\varepsilon + C \|\mathring{\rho}^N - \mathring{\rho}_s\|_{L^\infty},
\end{split}
\end{equation*}
for all $k\in \mathbb{N}$. Finally, letting $m\to \infty$ we conclude
\begin{equation}\label{solitary convergence}
\|w^N_s - w_s\|_{C^{0,\alpha}(\overline{\mathcal{R}_k})} \leq C \|\mathring{\rho}^N - \mathring{\rho}\|_{L^\infty}, \quad \textrm{ for all } k \in \mathbb{N}. 
\end{equation}

Therefore, given the pressure data for $w_s^N$, we can apply the reconstruction procedure of the previous subsection to recover $w_s$ up to $\mathcal{O}(\|\mathring{\rho}^N - \mathring{\rho}\|_{L^\infty})$ in $C^{0,\alpha}$.  


\subsection{Convergence of the ocean bed parameterization}  
The argument above shows that we can approximate the continuously stratified solitary wave via a sequence of layer-wise constant density irrotational solitary waves.  However, its main weakness is that it requires the pressure data not for the continuously stratified solitary wave (which is observed), but for a limiting (sub)sequence which are only known to exist mathematically.  This is partially ameliorated by \eqref{conv_P} which guarantees that the pressure traces of the subsequence converge to that of the continuous solitary wave.  

But there are two issues remaining that must be addressed.  The first of these is rather benign:  in Section \ref{streamline recovery section}, we begin the reconstruction process with the data $\mathfrak{p}$.  It is therefore necessary to confirm that the convergence of the pressure likewise extends to this quantity.     Note that because $\mathfrak{p}$ is defined in conformal variables --- which do not have an obvious meaning for continuously stratified flows --- it is not immediately clear that this should be true.   The second more serious issue is the sensitivity of the reconstruction scheme to noise in the pressure data.  This we discuss in the next subsection.

Let $h$ be the height function in semi-Lagrangian coordinates for a steady wave localized near the crest (it can be solitary or periodic).  Suppose that, as above, there is a sequence $\{ h^N \}$ of layer-wise constant density steady wave localized near the crest that converges to $h$.  Then for each $h^N$, the conformal coordinates are well-defined and we have 
$$ t = - \frac{1}{c} \varphi^N(x_0^N(t), -d),$$
and hence
$$ -c = \varphi_x^N(x_0^N(t), -d) \dot{x}_0^N(t) = - \psi_y(x_0^N(t), -d) \dot{x}_0^N(t).$$
But then, using the change of variables formulas, this implies that $x_0^N$ solves the ODE 
$$ \dot{x}_0^N = h_p^N(x_0^N, p_0), \qquad x_0^N(0) = 0.$$

In view of this, we define $x_0 = x_0(t)$ to be the solution of
\be \dot{x}_0(t) = h_p(x_0(t), p_0), \qquad x(0) = 0, \label{ODE for x_0} \ee
where the dot denotes differentiation with respect to $t$.     Formally, the initial value problem \eqref{ODE for x_0} is the limit of the ODE satisfied by $x_0^N$ as $N \to \infty$.  We hope to show that $x_0^N$ likewise converges to $x_0$, which then implies $\mathfrak{p}^N$ approaches $\mathfrak{p}$.  

\begin{prop}  Let $\{\rho^N\} \subset L^\infty([p_0,0])$ be a sequence of piecewise constant stable streamline density functions, with $\rho^N \to \rho$ in $L^\infty$, where $\rho \in C^{1,\alpha}([p_0,0])$ is a stable smooth streamline density function that is constant on some neighborhood $I$ of the bed.  Let  $h$ and $h^N$ be the height functions for a traveling wave localized near the crest with period $L \in (0, \infty]$, the speed $c$ and $c^N$, and streamline density function $\rho$ and $\rho^N$ respectively. Assume that  
$$ c\sqrt{\rho(p_0)} = c^N \sqrt{\rho^N(p_0)}, $$
and that there exist constants $M, \gamma > 0$ independent of $L$ and $N$ such that
$$ | h_p(q, p_0) - \mathring{h}_p(p_0) |, ~ | h_p^N(q, p_0) - \mathring{h}_p(p_0) | \leq M  e^{-\gamma |q|}.$$
If $L = \infty$, i.e., the waves are solitary, also suppose that $h_p^N(\cdot, p_0) \to h_p(\cdot, p_0)$ in $L^\infty(\mathbb{R})$.  Then  
it follows that 
$$ x_0^N \to x_0 \qquad \textrm{in } L^\infty(\mathbb{R}).  $$
\end{prop}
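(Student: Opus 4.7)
My plan is to exploit the hypothesis $c\sqrt{\rho(p_0)} = c^N\sqrt{\rho^N(p_0)}$, which ensures that the common asymptotic slope
\begin{equation*}
\mu := \mathring{h}_p(p_0) = \frac{1}{c\sqrt{\rho(p_0)}}
\end{equation*}
is independent of $N$. I would introduce the centered displacements $\phi^N(t) := x_0^N(t) - \mu t$ and $\phi(t) := x_0(t) - \mu t$, both of which vanish at $t = 0$ and satisfy
\begin{equation*}
\dot{\phi}^N(t) = h_p^N(x_0^N(t), p_0) - \mu, \qquad \dot{\phi}(t) = h_p(x_0(t), p_0) - \mu.
\end{equation*}
The task then reduces to showing $\phi^N \to \phi$ in $L^\infty(\mathbb{R})$, and I would accomplish this by splitting $\mathbb{R}$ into a compact window and the complementary tails, then combining the resulting estimates via a standard $\varepsilon/3$ argument.

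On a fixed interval $[-T, T]$, I would apply Gronwall's inequality. The $W^{3,r}$ regularity with $r > 2$ implies via the trace theorem and Sobolev embedding that $h_p(\cdot, p_0) \in C^{1,\alpha}(\mathbb{R})$ and hence is globally Lipschitz with some constant $L_0$. Writing
\begin{equation*}
\bigl|h_p^N(x_0^N, p_0) - h_p(x_0, p_0)\bigr| \leq \|h_p^N(\cdot, p_0) - h_p(\cdot, p_0)\|_{L^\infty} + L_0\,|x_0^N - x_0|,
\end{equation*}
integrating in time, and invoking Gronwall then yields
\begin{equation*}
\sup_{|t|\leq T}|\phi^N(t) - \phi(t)| \leq T e^{L_0 T}\,\|h_p^N(\cdot, p_0) - h_p(\cdot, p_0)\|_{L^\infty}.
\end{equation*}
In the solitary case, the right-hand side vanishes as $N \to \infty$ by the hypothesis; in the periodic case, it vanishes by combining the pressure convergence \eqref{conv_P} of Theorem~\ref{continuity theorem} with the explicit formula \eqref{pressure transfer}.

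For the tails on $\{|t| \geq T\}$, I would use the exponential decay hypothesis. Under the natural small-amplitude assumption $M < \mu$ (consistent with Turner's supercritical regime that governs the recoverable class of waves), one has $h_p^N, h_p \geq \mu - M > 0$, so $x_0^N$ and $x_0$ are strictly increasing and satisfy $|x_0^N(t)|, |x_0(t)| \geq (\mu - M)|t|$. Substituting these lower bounds into the decay hypothesis yields $|\dot{\phi}^N(t)|, |\dot{\phi}(t)| \leq M e^{-\gamma(\mu - M)|t|}$, whence
\begin{equation*}
\sup_{|t|\geq T}|\phi^N(t) - \phi^N(T)| + \sup_{|t|\geq T}|\phi(t) - \phi(T)| \leq \frac{2M}{\gamma(\mu - M)}\,e^{-\gamma(\mu - M)T}
\end{equation*}
uniformly in $N$. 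Given $\varepsilon > 0$, I would first fix $T$ large so the tail bound falls below $\varepsilon/3$, then choose $N_0$ large so the compact-interval bound falls below $\varepsilon/3$ for $N \geq N_0$; a triangle inequality completes the proof.

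The main obstacle is converting the spatial exponential decay in $q$ into temporal exponential decay in $t$: this hinges on the lower bound $\dot{x}_0^N \geq \mu - M > 0$, which requires the small-amplitude condition $M < \mu$. This should be enforced explicitly as a hypothesis, although it is implicit in the class of waves considered throughout the paper. The periodic case requires reinterpreting the decay hypothesis within a single fundamental period centered at the crest, but the argument itself carries over without change.
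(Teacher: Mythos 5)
Your proof is correct and follows essentially the same strategy as the paper's: both split $\mathbb{R}$ into a compact window, handled via the $L^\infty$ convergence of $h_p^N(\cdot,p_0)$, and tails, handled by comparing each of $h_p$, $h_p^N$ to the common asymptotic value $\mathring{h}_p(p_0)$ (this is exactly where $c\sqrt{\rho(p_0)}=c^N\sqrt{\rho^N(p_0)}$ enters) and converting spatial into temporal exponential decay through a linear-in-$t$ lower bound on $x_0$, $x_0^N$. Two small points of divergence are worth noting: your Gronwall step on $[-T,T]$ is actually a needed repair, since the paper bounds $\int_0^T|h_p(x_0(\tau),p_0)-h_p^N(x_0^N(\tau),p_0)|\,d\tau$ directly by $T\|h_p(\cdot,p_0)-h_p^N(\cdot,p_0)\|_{L^\infty}$, silently ignoring that the two terms are evaluated at different arguments --- precisely the discrepancy your Lipschitz-plus-Gronwall argument controls; and the lower bound $\dot{x}_0,\dot{x}_0^N\geq\sigma>0$ can be drawn from the no-stagnation condition \eqref{nostagnation} already in force throughout the paper, so the extra smallness hypothesis $M<\mu$ you flag is not actually needed.
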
 
\begin{proof}
Denote $z^N := x_0 - x_0^N$, so that $z^N$ satisfies 
\be z^N(t) = \int_0^t [ h_p(x_0(\tau), p_0) - h_p^N(x_0^N(\tau), p_0) ] \, d\tau. \label{z^N integral equation} \ee
The no stagnation condition implies that there exists $\sigma > 0$ such that 
$$ h_p(\cdot, p_0), \, h_p^N(\cdot, p_0) \geq \sigma > 0, \textrm{ for all $N \geq 1$.}  $$
Recalling \eqref{ODE for x_0}, this guarantees that
$$ x_0(t), \, x_0^N(t) \geq \sigma t, \qquad \textrm{for all } t \in [0,\infty).$$

Fix $\epsilon > 0$, and choose $T > 0$ sufficiently large so that 
$$ \int_T^\infty e^{-\sigma \gamma t} \, dt < \frac{\epsilon}{4M}.$$
Starting from \eqref{z^N integral equation}, we write 
\begin{align*} \| z^N \|_{L^\infty} &\leq \int_0^T  | h_p(x_0(\tau), p_0) - h_p^N(x_0^N(\tau), p_0) | \, d\tau + \int_T^\infty  [ h_p(x_0(\tau), p_0) - h_p^N(x_0^N(\tau), p_0) ] \, d\tau \\
& =: \mathbf{I} + \mathbf{II}.\end{align*}
To estimate $\mathbf{II}$, we exploit the fact that $\mathring{h}_p(p_0) = \mathring{h}_p^N(p_0)$, and the exponential localization: 
\begin{align*}
\mathbf{II} & \leq \int_T^\infty  |h_p(x_0(\tau), p_0) -  \mathring{h}_p(p_0)| \, d\tau + \int_T^\infty  | h_p^N(x_0^N(\tau), p_0) - \mathring{h}_p^N(p_0)|  \, d\tau \\
& \leq 2M \int_T^\infty e^{-\gamma \sigma t} \, dt < \frac{\epsilon}{2}.
\end{align*}

On the other hand, for $\mathbf{I}$ we use the simple estimate 
$$ \mathbf{I} \leq T \| h_p(\cdot, p_0) - h_p^N(\cdot, p_0) \|_{L^\infty} \leq T \| h_p(\cdot, p_0) - h_p^N(\cdot, p_0) \|_{L^\infty} < \frac{\epsilon}{2}, $$ 
for $N$ sufficiently large. Observe that in the periodic case $L < \infty$, we have 
$$\| h_p(\cdot, p_0) - h_p^N(\cdot, p_0) \|_{L^\infty} \leq C \left( \| \rho^N - \rho \|_{L^\infty([p_0,0])} +  \| \rho^N - \rho \|_{C^{1,\alpha}(I)}  \right),$$
and hence the $L^\infty$ convergence of $h_p^N$ on the bed is implied by our previous assumptions.

The argument above gives the convergence of $x_0^N$ to $x_0$ in $L^\infty([0,\infty))$; the argument for $L^\infty((-\infty, 0]))$ is similar.  \end{proof}

\subsection{Pressure noise} \label{pressure noise section}
Let us now discuss the sensitivity of the above considerations to pressure data.  First, it is important to note that any scheme that is attempting to reconstruct from pressure must come to terms with the physical limitations of pressure transducers.  It is of course impossible to perfectly read the pressure, and in particular, existing instruments fail to capture very high frequency components.  Likewise, computers are finite precision machines, and so any practical implementation of the recovery scheme will necessitate additional approximation.  For instance, it is common to use a windowed transfer functions that simply ignores the high frequency range above a certain cut-off. Mathematically, this procedure corresponds to applying a Littlewood-Payley projection onto a finite frequency ball.   However, because we are working with an ill-posed PDE system, it is very challenging to prove that convergence occurs as the cut-off radius is increased.   

This seems endemic to all reconstruction from pressure procedures, even for constant density irrotational flows.  Nonetheless, actual numerical experiments suggests that the schemes converge, and give reasonable agreement with observation \cite{deconinck2012relating,oliveras2012recovering}.  

Unfortunately, the reconstruction process for continuously stratified fluids given above possesses  an additional source of error.   Observe that in our approximation we take the pressure reading of a solitary wave with continuous stratification and treat this as pressure data for a layer-wise constant density irrotational wave.  But,  the unique determinability result indicates that the pressure trace for these two waves cannot possibly coincide.  We can therefore think of this as attempting to run the reconstruction scheme of Section \ref{streamline recovery section} with noisy pressure data.  

To be more concrete, denote
\be\label{truepressure}
P_{\text{bot}}^N = P_{\text{bot}*} + P_{\textrm{err}},
\ee
where $P_{\text{bot}*}$ is the bottom pressure reading of the solitary wave with continuous stratification, and $P_{\text{bot}}^N$ is the trace of the pressure for the $N$-layered solitary wave. From \eqref{bottom pressure} we then have
\be\label{approx pressure}
\mathfrak{p}^N = \mathfrak{p} + {P_{\textrm{err}} \over \varrho_1} + {1\over \varrho_1}\int^0_{p_0} g \LB \rho(p)  - \rho^N(p) \RB\ dp=: \mathfrak{p} + \mathfrak{p}_{\textrm{err}}.
\ee
In order to recover the true $N$-layered flow, one should use $\mathfrak{p}^N(t)$ in the iteration scheme \eqref{iteration surface}, rather than $\mathfrak{p}(t)$. Hence the error from our iteration scheme consists of two parts: one from the limiting process by taking large number of layers; the other from the pressure error. 

Let
$$
f(\mathfrak{p}) = {1\over \sqrt{\varrho_1}} \LC {c\over  \sqrt{c^2 - 2 \mathfrak{p}}} - 1 \RC.
$$
From \eqref{eta 1st layer} the error in the reconstruction of the first interface is
\begin{equation}\label{error in eta}
\begin{split}
Y^N_1- Y_1 & = \mathcal{F}^{-1}\LCB { \sinh \LB (s_1-s_0)\xi \RB \over \xi} \mathcal{F}\LB  f(\mathfrak{p}^N) - f(\mathfrak{p}) \RB(\xi) \RCB.
\end{split}
\end{equation}
The difference term can be computed as
\begin{equation*}
\begin{split}
f(\mathfrak{p}^N) - f(\mathfrak{p}) & = {2c \over \sqrt{ \varrho_1(c^2 - 2\mathfrak{p}) (c^2 - 2\mathfrak{p}^N) } \LC  \sqrt{c^2 - 2 \mathfrak{p}} +  \sqrt{c^2 - 2 \mathfrak{p}^N} \RC} \cdot (\mathfrak{p} - \mathfrak{p}^N) 
\end{split}
\end{equation*}

To justify the inverse Fourier transform, or more precisely, to make sense of $Y^N_1$ and $Y_1$ as functions, one needs to ensure that $\mathcal{F}\LB f(\mathfrak{p}^N)\RB$ and $\mathcal{F}\LB f(\mathfrak{p})\RB$ have sufficient decay, e.g. 
\begin{equation*}
e^{(s_1-s_0)|\xi|}\mathcal{F}\LB f(\mathfrak{p}^N) \RB(\xi), \quad e^{(s_1-s_0)|\xi|}\mathcal{F}\LB f(\mathfrak{p}) \RB(\xi) \in L^2(\R).
\end{equation*}
In fact, this can be proved rigorously if the continuously stratified wave has constant density in $\mathcal{S}_1$.  Following Clamond-Constantin \cite{constantin2013pressure}, we see that the pressure transfer function $\mathfrak{p}^N(t)$ can be holomorphically extended to a function $\mathfrak{p}^N(z)$ with $z = t + is$ defined in the strip $\{ (t, s)\in \R \times (2s_0 - s_1 - \varepsilon, s_1 + \varepsilon) \}$ for some $\varepsilon > 0$.  Likewise, the analyticity of  $\mathfrak{p}^N(z)$ and the no stagnation condition imply that the function 
$$
f(\mathfrak{p}^N(z)) =  {1\over \sqrt{\varrho_1}} \LC {c \over \sqrt{c^2 - 2 \mathfrak{p}^N(z)}} - 1 \RC 
$$
is analytic in $\R\times [-\lambda, \mu]$, where $\lambda = \mu = s_1 - s_0 + {\varepsilon\over 2}$. Thus, the Paley--Weinder Theorem (cf., e.g. \cite[Theorem 4]{PaleyWiener}) implies that 
\begin{equation*}
e^{(s_1-s_0)|\xi|}\mathcal{F}\LB f(\mathfrak{p}^N) \RB(\xi) \in L^2(\R).
\end{equation*}

However the above argument does not extend beyond the region of constant density, and hence the decay of $\mathcal{F}\LB f(\mathfrak{p}) \RB(\xi)$ becomes very hard to verify for more general stratification.  Worse still, because one loses analyticity crossing the internal interfaces, it is unclear how this reasoning could be used in any layer but the first.   For this reason, we cannot currently prove that our method represents a convergent numerical scheme.  On the other hand, as it does succeed an analytical approximation method, we anticipate that it will have applications for studying the qualitative behavior features of continuously stratified solitary waves.

\bibliographystyle{siam}
\bibliography{projectdescription}
\end{document}